\declaretheorem{lemma}
\declaretheorem{proposition}
\newcommand{\algcom}[1]{\tcp*[r]{#1}}
\newcommand{\cmdfont}[1]{\textsc{#1}}
\newcommand{\pq}{\cmdfont{PQ}}
\newcommand{\ins}{\cmdfont{Insert}}
\newcommand{\del}{\cmdfont{Remove-Min}}
\newcommand{\dec}{\cmdfont{Decrease-Prio}}
\newcommand{\ety}{\cmdfont{Empty}}
\newcommand{\rem}{\cmdfont{Remove}}
\newcommand{\rlx}{\cmdfont{Relax}}
\newcommand{\rlxprediction}{\cmdfont{Relax-Prediction}}
\newcommand{\prediction}{\cmdfont{Prediction}}
\newcommand{\oracle}{\cmdfont{Oracle}}
\newcommand{\Dijkstra}{\cmdfont{Dijkstra}}
\newcommand{\DijkstraPruning}{\cmdfont{Dijkstra-Pruning}}
\newcommand{\DijkstraPrediction}{\cmdfont{Dijkstra-Prediction}}
\newcommand{\DijkstraLearning}{\DijkstraPrediction}
\newcommand{\restart}{\cmdfont{Update-Prediction}}
\newcommand{\len}{\ensuremath{i_{0}}}
\newcommand{\tent}{\ensuremath{tent}}
\newcommand{\Dist}{\ensuremath{D}}
\newcommand{\Pred}{\ensuremath{P}}
\newcommand{\mdelta}{\ensuremath{\varepsilon}}
\newcommand{\event}[1]{\ensuremath{\mathcal{E}_{#1}}}
\newcommand{\inrp}{\text{INRP}}
\newcommand{\inrr}{\text{INRR}}
\newcommand{\inrs}{\text{INRS}}
\newcommand{\pr}[1]{\ensuremath{\mathbf{P}(#1)}}
\newcommand{\prl}[1]{\ensuremath{\mathbf{P}\left(#1\right)}}
\newcommand{\pe}[1]{\ensuremath{\mathbf{E}[#1]}}
\newcommand{\pel}[1]{\ensuremath{\mathbf{E}\left[#1\right]}}
\newcommand{\pcond}{\ensuremath{\; | \;}}
\newcommand{\set}[1]{\{#1\}}
\newcommand{\sset}[2]{\set{#1 \; : \; #2}}
\title{Dijkstra's Algorithm with Predictions to Solve the Single-Source Many-Targets Shortest-Path Problem}
\author{
  Willem Feijen \\
  Networks and Optimization, Centrum Wiskunde \& Informatica (CWI) \\
  Science Park 123, 1098 XG Amsterdam, The Netherlands \\
  \href{mailto:w.feijen@cwi.nl}{\url{w.feijen@cwi.nl}}\\
  \And
  Guido Sch\"afer \\
  Networks and Optimization, Centrum Wiskunde \& Informatica (CWI) \\
  Science Park 123, 1098 XG Amsterdam, The Netherlands \\
  Institute for Logic, Language and Computation (ILLC), University of Amsterdam \\
  Science Park 107, 1098 XG Amsterdam, The Netherlands \\
  \href{mailto:g.schaefer@cwi.nl}{\url{g.schaefer@cwi.nl}} 
}
\begin{document}

\maketitle


\begin{abstract}

We study the use of machine learning techniques to solve a fundamental shortest path problem, known as the \emph{single-source many-targets shortest path problem (SSMTSP)}. Given a directed graph with non-negative edge weights, our goal is to compute a shortest path from a given source node to any of several designated target nodes. 
Basically, our idea is to equip an adapted version of Dijkstra's algorithm with machine learning predictions to solve this problem: Based on the trace of the algorithm, we design a neural network that predicts the shortest path distance after a few iterations. The prediction is then used to prune the search space explored by Dijkstra's algorithm, which may significantly reduce the number of operations on the underlying priority queue. We note that our algorithm works independently of the specific method that is used to arrive at such predictions. Crucially, we require that our algorithm always computes an optimal solution (independently of the accuracy of the prediction) and provides a certificate of optimality. As we show, in the worst-case this might force our algorithm to use the same number of queue operations as Dijkstra's algorithm, even if the prediction is correct.
In general, however, our algorithm may save a significant fraction of the priority queue operations. 
%
%
We derive structural insights that allow us to lower bound these savings on partial random instances. In these instances, an adversary can fix the instance arbitrarily except for the weights of a subset of relevant edges, which are chosen randomly. Our bound shows that the number of relevant edges which are pruned increases as the prediction error decreases. We then use these insights to derive closed-form expressions of the expected number of saved queue operations on random instances. We also present extensive experimental results on random instances showing that the actual savings are oftentimes significantly larger. 

\end{abstract}


\newpage

\setcounter{page}{1}
\section{Introduction}

In recent years, techniques from machine learning (ML) have proven extremely powerful to tackle problems that were considered to be very difficult or even unsolvable. 
Success stories include contributions to health care, natural language processing, image recognition, board games, etc. (see, e.g., \cite{chugh2021survey,lu2007survey,otter2020survey,qayyum2020secure,silver2018general,silver2017mastering}).
As such, machine learning is intimately connected with optimisation because many learning algorithms are based on the optimisation of some loss function over a large set of training samples. Even though optimisation techniques play a vital role in the design of ML-approaches, the reverse direction of using ML-techniques to improve optimisation algorithms is much less explored. 
%

%
In this paper, we contribute to the emerging research agenda of studying how ML-approaches can be used to design improved algorithms for optimisation problems. 
One research direction along these lines has become known as \emph{Algorithms with Predictions} (or \emph{Learning Augmented Algorithms}). This research theme first emerged in the area of online algorithms (see \cite{MV2017,KPS2018,LV2021}), where it has been studied intensively for several years now. 
More recently, similar ideas are investigated also in the context of exact algorithms, data structures, equilibrium analysis and mechanism design. An overview of research articles that appeared on these topics is available at \url{https://algorithms-with-predictions.github.io}. 

In the context of online computation, the idea is to consider online algorithms that can make use of predictions of certain (problem-specific) parameters. For example, such predictions could be obtained from historical input data through the use of machine learning techniques. In fact, these predictions cannot be assumed to be exact as they might be prone to (arbitrarily large) errors. The goal then is to design online algorithms that achieve improved competitive ratios (depending on the quality of the predictions).
Along these lines, the following three objectives emerged in the literature (see also \cite{LV2021}): 
(1) \emph{$\alpha$-consistency}:
If the predictions are perfect (i.e., error-free), the competitive ratio of the online algorithms is bounded by $\alpha$.
(2) \emph{$\beta$-robustness}: 
Even if the predictions are arbitrarily bad, the competitive ratio of the online algorithm is bounded by $\beta$. Ideally, $\beta$ is not (much) worse than the worst-case competitive ratio known for the problem.
(3) \emph{$\gamma$-approximate}: 
If the predictions are within an error of $\eta$ of the actual values (for a suitably defined error parameter $\eta)$, the competitive ratio of the online algorithm is bounded by some function $\gamma(\eta)$. 
Even though the above model was originally introduced for the online setting, it naturally extends to other worst-case performance measures. More recently, researchers started to apply this framework to the study of the running time of algorithms (see, e.g., \cite{NEURIPS2021_5616060f,pmlr-v162-chen22v}), the price of anarchy of equilibria (see, e.g., \cite{GKST2022}) and the social welfare efficiency of truthful mechanisms (see, e.g., \cite{XuLu2022}).

We investigate the use of predictions in the context of a fundamental shortest path problem, which is known as the \emph{single-source many-targets shortest path problem (SSMTSP)}. Given a directed graph $G = (V, E)$ with non-negative edge weights $w: E \rightarrow \mathbb{R}_{\ge 0}$, a source node $s \in V$ and a subset $T \subseteq V$ of designated target nodes, the goal is to compute a shortest path from $s$ to any of the target nodes in $T$. 
Note that the SSMTSP problem generalises the single-source single-target shortest path problem (for which $T = \set{t}$ for a given target node $t \in V$) and the single-source all-targets shortest path problem (for which $T = V$).\footnote{The attentive reader will have noted that the SSMTSP problem can be reduced to a single-source single-target shortest path problem simply by adding a new target node $t$ and connecting each node in $T$ to $t$ with a zero-weight edge. Unless stated otherwise, we adopt the viewpoint of having a set of target nodes throughout the paper.}
These shortest path problems are among the most fundamental optimisation problems with various applications in practice. Further, both the assignment problem and the maximum weight matching problem in bipartite graphs can be reduced to the solving of $n$ SSMTSP problems, where $n$ is the maximum number of nodes on each side of the bipartition (see, e.g., \cite{bast2003heuristic} for details). Clearly, this emphasises the importance of deriving efficient algorithms for this problem. 

The SSMTSP problem can be solved exactly in (strongly) polynomial time by using a slight adaption of the well-known shortest path algorithm due to \cite{dijkstra1959note} (subsequently referred to as \Dijkstra). Basically, \Dijkstra\ grows a shortest path tree rooted at the source node $s$ by iteratively adding new nodes to the tree by increasing distances, until the first target node from $T$ is included.
\Dijkstra\ guarantees a worst-case running time of $O(m + n \log n)$, where $n$ and $m$ denote the number of nodes and edges, respectively. In order to achieve this running time, crucially the underlying priority queue data structure must be implemented through \emph{Fibonacci heaps} (see \cite{fredman1987fibonacci}). In terms of worst-case running time, \Dijkstra\ is the best known algorithm that runs in strongly polynomial time for the shortest path problem with \emph{arbitrary} non-negative edge weights (see, e.g., \cite{cormen2009introduction}). If the edge weights are known to be non-negative integers from a restricted range, there are better algorithms (see also the related work section).

Given that the SSMTSP problem can be solved very efficiently by Dijkstra's algorithm, it is unclear how predictions can help to further improve the running time of the algorithm.\footnote{Note that reading the input instance alone takes time $\Theta(m + n)$. Thus, the best we can hope for is to reduce the $O(n \log(n))$ term in the running time of \Dijkstra\ to $O(n)$.}
In this paper, we focus on exact algorithms for the SSMTSP problem. More specifically, we require that the algorithm computes a shortest path together with a certificate that proves optimality, even if the predictions are arbitrarily bad. A common approach to exhibit such a certificate is by means of a corresponding dual solution to the linear programming formulation of the problem. 
The single-source single-target shortest path problem has a natural (flow-based) linear programming formulation (see, e.g., \cite{PS1998}). The dual of this linear program associates a dual variable $\pi(u)$ with every node $u \in V$ and reads as follows:\footnote{Here, for the sake of conciseness, we state the dual linear program adopting the (equivalent) viewpoint of having a single target node.}
\begin{equation}\label{eq:dual}\tag{D}
\max \quad \pi(t)-\pi(s) \quad   \text{subject to} \quad  \pi(v) \le \pi(u) + w(u,v) \ \ \  \forall (u,v) \in E.
\end{equation}
In this formulation, we can fix $\pi(s) = 0$ without loss of generality. The respective dual values then have a natural interpretation as shortest path distances. By applying the complementary slackness condition, a dual feasible solution $\pi=(\pi_u)_{u \in V}$ proves the optimality of an $s,t$-path if and only if every edge $e = (u,v)$ on that path is \emph{tight}, i.e., $\pi(v) = \pi(u) + w(u,v)$. Most exact shortest path algorithms known in the literature provide a certificate of optimality by constructing such an optimal dual solution (or, equivalently, shortest path distances).


\cite{pmlr-v162-chen22v} recently studied algorithms with predictions for the single-source shortest path problem with arbitrary edge weights under the assumption that a (possibly infeasible) dual solution $\hat{\pi} = (\hat{\pi}(u))_{u \in V}$ is available as prediction. Note that this predictions consists of $n = |V|$ dual values. Building on earlier work by \cite{NEURIPS2021_5616060f}, they show that such predictions can be used to obtain an algorithm to solve this problem in time $O(m \min \set{\| \hat{\pi}-\pi^*\|_1 \cdot \| \hat{\pi} - \pi^*\|_\infty, \sqrt{n}\log(\| \hat{\pi} - \pi^*\|_{\infty})})$, where $\pi^*$ is an optimal dual solution to the problem. In particular, the worst-case running time improves as the difference (evaluated with respect to the $L_1$- and $L_\infty$-norm) between the predicted duals $\hat{\pi}$ and the optimal duals $\pi^*$ decreases. Clearly, such results are appealing as they provide a fine-grained running time guarantee depending on the (cumulative) additive error of the predictions. On the negative side, however, assuming that one has access to the entire dual solution $\hat{\pi}$ might be a rather strong assumption in certain settings---especially because the number of nodes (and thus the number of values to predict) can be very large in practice. 

In this paper, we therefore consider the other end of the spectrum: We assume that our algorithm has access to a \emph{single} predicted value only, namely the shortest path distance of a target node. Said differently, we assume that we have access to a prediction of the objective function value $\hat{\pi}_t$ of \eqref{eq:dual} (which might not necessarily coincide with the optimal objective function value $\pi_t^*$). The question that we address here is whether such a minimalistic prediction suffices to still achieve a running time improvement of Dijkstra's algorithm. 

Clearly, exploiting a prediction of the shortest path distance only is much more restrictive than assuming that one has access to an entire dual solution---in fact, the latter can be used to derive the former. Intuitively, it is clear that it is more challenging to improve Dijkstra's algorithm using such an (inferior) prediction. The following observation lends further support to this intuition. Recall that we require that our algorithm provides a certificate of optimality. 
If the entire dual solution is predicted perfectly, this becomes trivial: the dual itself provides such a certificate and is readily available in this case. 
On the other hand, this remains a challenging task in our setting: even if the correct shortest path distance is known, it is non-trivial (in terms of computational work) to compute a corresponding dual solution that proves optimality. 
This latter observation will be made more rigorous in Section~\ref{s:lower-bound}. 

%
%

Our algorithm is based on a heuristic improvement of Dijkstra's algorithm proposed by \cite{bast2003heuristic} (referred to as \DijkstraPruning). 
The key idea of their algorithm is to exploit that Dijkstra's algorithm might encounter many target nodes in $T$ before the actual target node (defining the shortest path distance) is added to the tree: \DijkstraPruning\ simply keeps track of the smallest distance $B$ to a node in $T$ that was encountered so far. The algorithm then \emph{prunes} each edge that leads to a node whose distance exceeds $B$---clearly, these edges are irrelevant for the final shortest path. 
A more detailed description of this algorithm will be given in Section~\ref{sec:preliminaries}.
Although the pruning idea is very simple, it can significantly improve the running time of the algorithm.  
In the worst case, however, \DijkstraPruning\ has the same running time as \Dijkstra. Therefore, \cite{bast2003heuristic} investigate the effectiveness of \DijkstraPruning\ on random instances, both analytically and experimentally. They show that the pruning of irrelevant edges significantly reduces the number of executed priority queue operations on these instances.

\subsection{Our Contributions}
The main contributions presented in this paper are as follows: 
\begin{enumerate}
\item 
We combine an ML-approach with the pruning idea above to obtain a new algorithm for the SSMTSP problem. 
Basically, our algorithm (referred to as \DijkstraPrediction) computes a prediction $\Pred$ of the final shortest path distance after a few iterations, and then uses this prediction $\Pred$ together with the pruning trick of \cite{bast2003heuristic} to further reduce the search space it explores. 
On the ML-side, one of the challenges is to define features capturing the essence of the Dijkstra run which can be used to arrive at a good prediction. 
On the algorithm-design side, we need to tackle the problem that the prediction might be an underestimation of the actual shortest path distance. 
We note that our algorithm works independently of the specific method that is used to arrive at the prediction $P$. 

\item 
We prove that our new algorithm \DijkstraPrediction\ always computes an exact solution and has a worst-case running time of $O(m + n \log n)$ (independent of the prediction error). 
In particular, \DijkstraPrediction\ retains the best worst-case running time and always provides a certificate of optimality. That is, while our algorithm will never use more priority queue operations than the adapted Dijkstra algorithm, it can potentially save many queue operations additionally.

\item 
We establish a lower bound (both in expectation and with high probability) on the number of edges pruned by our new algorithm \DijkstraPrediction. 
Our bound depends on the number of \emph{relevant} edges (leading to nodes whose distances exceed the shortest path distance) and increases as the prediction error decreases.
Our bound applies to arbitrary instances as long the weights of the relevant edges are chosen at random---we refer to this setting as the \emph{partial random model}.
Further, we show that no improvement is possible in the worst case, even if the prediction is perfect---this also justifies the use of our partial random model. 

\item We then derive a bound on the expected number of queue operations saved by \DijkstraPrediction\ in comparison to \DijkstraPruning\ and \Dijkstra\ on random instances. 
While \DijkstraPruning\ already significantly improves over \Dijkstra, we show that \DijkstraPrediction\ further reduces the number of nodes which are inserted but never removed from the queue. 
More specifically, we consider Erd\"os-R\'enyi random graphs with average degree $c$ and uniform edge weights in $[0,1]$, where the source node is chosen uniformly at random and each node is selected as a target node with probability $q$ (formal definitions are given below).
If $\Dist$ denotes the shortest path distance and $\mdelta$ denotes the (additive) error of the prediction with $D < 1 - \mdelta$, we show that the number of nodes inserted but never removed from the priority queue by \DijkstraPrediction, \DijkstraPruning\ and \Dijkstra, respectively, is at most
\[
\frac{1}{q}\left(1 + \ln(c-1) - \ln\left(\frac{1-D}{\mdelta}\right)\right), \quad
\frac{1}{q}(1 + \ln(c-1)) \quad \text{and} \quad 
\frac{c-1}{q}.
\]
Here the latter two bounds were established by \cite{bast2003heuristic}.
%
Technically, this is the most challenging part of our analysis as we need to estimate the savings incurred by the pruning bound $B$ as well as the prediction bound $\Pred$ in one probabilistic argument.

\item 
We also report on our extensive experimental studies on random graphs to complement our theoretical findings above. We compare our new algorithm \DijkstraPrediction\ to the existing ones and evaluate different prediction algorithms. 
Our experiments show that \DijkstraPrediction, which combines 
a neural network ML-approach to compute the prediction with an \restart\ procedure to handle possible underestimations, significantly outperforms all other algorithms (i.e., combinations of different prediction algorithms). 
\end{enumerate}

\subsection{Related Work}

\subparagraph*{Algorithms with predictions.}
Using ML techniques in combinatorial algorithms has been studied intensively recently. 
We refer the reader to \cite{bengio2020machine} for a survey paper on leveraging ML to solve combinatorial optimisation problems. 
In this survey, three different approaches of using ML components in combinatorial optimisation algorithms are given. Our approach falls into the second of the three approaches, in which meaningful properties of the optimisation problem are learnt and used to augment the algorithm. 

The line of research known as \emph{Algorithms with Predictions} falls into this second approach and aims to achieve near optimal algorithms when the predictions are good, while falling back to the worst-case behaviour if the prediction error is large (see, e.g., \cite{algorithmspredictions, onlinescheduling} and the references provided above).
This idea is applied to optimisation problems like the \emph{ski rental problem}, \emph{caching problem} and \emph{bipartite matching}. 

\cite{NEURIPS2021_5616060f} use predictions to improve the worst case running time to solve the bipartite matching problem. Their main idea is to use predictions for the dual values as a warm start of the primal-dual algorithm. Since the predicted duals might not be feasible, they propose a rounding procedure to compute a feasible dual, close to the predicted one. Moreover, they prove that the prediction of the duals that they require for the algorithm can actually be learned, by showing that this prediction problem has low sample complexity.

\cite{pmlr-v162-chen22v} builds further upon the results of \cite{NEURIPS2021_5616060f}. First, they give an improvement of the algorithm for bipartite matching, which reduces the worst case running time even more. Secondly, they extend the idea of using predictions for primal-dual algorithms and apply it to a shortest path problem. When the predictions are accurate enough, they achieve an almost linear running time. Further, they propose a general reduction-based framework for learning-based algorithms and extend the PAC-learnability results of \cite{NEURIPS2021_5616060f} beyond the bipartite matching problem.

Our paper differs from \cite{pmlr-v162-chen22v} since our algorithm only requires a single prediction, for the shortest path value, instead of a learned dual for each node. 
That is, our algorithm requires less in terms of prediction, on the other hand, our algorithm does not improve the worst case running time. Instead, we can prove a lower bound on the number of expensive priority queue operations that are saved. 

\subparagraph*{Classical shortest path algorithms.}
An extensive survey of combinatorial algorithms to solve the shortest path problem is given by \cite{madkour2017survey}. We give a short summary of their extensive report, touching upon different used shortest path techniques by grouping the methods in four categories. 

As explained above, \cite{fredman1987fibonacci} improve Dijkstra by introducing Fibonacci heaps. Alternative heap structures that gave further improvements are AF-heaps \cite{fredman1990blasting, fredman1990trans, fredman1993surpassing} and relaxed fibonacci heaps \cite{driscoll1988relaxed}. An implementation based on stratified binary trees is introduced by \cite{van1975preserving}. \cite{thorup1999undirected} indicates there is an analogy between sorting and Single Source Shortest path, claiming that SSSP is no harder than sorting the edge weights. \cite{han2001improved} improves on these results. \cite{thorup1999undirected} builds hierarchical bucketing structure, which is improved by \cite{hagerup2000improved}. 

The second category contains the \emph{Distance Oracle} algorithms, introduced by \cite{thorup2005approximate}, which consist of a pre-processing phase and a query phase. In the pre-processing phase an auxilliary data structure is constructed, which is queried in the query phase to compute the shortest path. Distance oracle algorithms can be both exact, like in \cite{fakcharoenphol2006planar} or approximate, like in \cite{elkin20041}. Some methods approximate distance using a \emph{spanner}, a subgraph that maintains the locality aspects of the original graph. Other methods approximate distances using a \emph{landmark approach}, where each vertex stores distances to a set of chosen landmarks \cite{sommer2014shortest}.
All distance oracle algorithms deal with a trade-off between space complexity and query time.

\emph{Goal-Directed} Shortest Path algorithms fall in the third category. These algorihtms add annotations to vertices or edges with additional information. This allows the algorithm to determine which part of the graph to search in the search phase, and which parts to prune. A well known algorithm in this category is A*, which, unlike Dijkstra, is an informed algorithm, since it searches the route which leads to the goal. If an admissible heuristic is used, A* will return the optimal shortest path, but it might fail if the heuristic does not work well. Several variants and improvements to A* have been proposed, which include landmark approaches \cite{goldberg2005computing} or the concept of \emph{reach} \cite{gutman2004reach}. Intuitively, the reach of a vertex encodes the lengths of shortest paths on which this vertex lies. Other goal-directed methods include edge labels \cite{kohler2005acceleration, schulz2000dijkstra, ich2006extremely}, the arc flag approach \cite{mohring2007partitioning, hilger2009fast, bauer2010sharc} or pre-computed cluster distances \cite{maue2010goal}. In this last method, the graph is partitioned in clusters, after which the shortest connection between clusters is stored. 
Interestingly, also for $A^*$, recent research shows an interest for replacing heuristics with machine learning. In \cite{eden2022embeddings}, estimates in $A^*$ that were formerly done with heuristics are executed with learning techniques, based on features of the nodes. They find there is a trade-off between the amount of information used to describe a node and the improvement in running time of the algorithm. 

The last category in this non-extensive list of shortest path methods are the \emph{Hierarchical} shortest path methods. These methods are prominent for problems which naturally exhibit a hierarchical structure, like road networks. 
An example of a hierarchical method are the highway hierarchies, which label an edge on a shortest path as a highway if it is not in the proximity of the source or target, as done in \cite{sanders2005highway, sanders2006engineering}.
Other hierarchical methods are contraction hierarchies \cite{geisberger2008contraction, geisberger2012exact} and hub labelling \cite {gavoille2004distance, thorup2005approximate}. 



%

\subparagraph*{Approximating shortest paths using ML.}
Next to classical combinatorial approaches, there has also been great interest from the field of ML in finding approximates for the shortest path distance, using an ML perspective. 
For example, \cite{bagheri2008finding} compute shortest paths by using a genetic algorithm. 
Their algorithm works faster than Dijkstra, but they only test on small graphs with at most 80 nodes. Also, more recently, using ML techniques to approximate shortest path distances has lead to interesting results. 
For example, \cite{Rizi_2018} create an estimate for the shortest path distance between two nodes in a two step procedure: first a deep learning vector embedding is applied and then a well-known landmark procedure afterwards 
(see, e.g.,  \cite{zhao2010orion, zhao2011efficient}). 
\cite{Rizi_2018} show results on large-scale real-world social networks with more than one million nodes. Their method  differs from our approach in the sense that an algorithm is created to approximate shortest path distances in one specific large-scale real-world graph, opposed to an algorithm which can be used for any graph from a set of random graphs with similar properties. 



\subsection{Organisation of Paper} 
The paper is organised as follows: 
In Section \ref{sec:preliminaries}, we formally define the problem, describe the adapted Dijkstra algorithm on which our algorithms are based on, and introduce the random graph model that we use in this paper. 
In Section \ref{sec:DijkstraML}, we introduce our new algorithm that combines the edge pruning idea implemented by the adapted Dijkstra algorithm with shortest path predictions and prove its correctness. We also give an \restart\ procedure to handle underestimations of the shortest path distance. 
In Section \ref{s:lower-bound}, we prove a lower bound on the number of saved queue operations if the edge weights are chosen at random. We apply this bound to estimate the savings on sparse random graphs. 
In Section \ref{s:prediction_methods}, we elaborate on different prediction methods (both ML-based and based on breadth-first search); we remark that our algorithm can be used with arbitrary prediction algorithms. 
Finally, in Section \ref{sec:experiments} we describe our experimental setup and report on the respective findings. 

\section{Preliminaries}
\label{sec:preliminaries}


The \emph{single-source many-targets shortest path problem (SSMTSP)} has been defined in the introduction. 
We use $n$ and $m$ to refer to the number of nodes and edges of the underlying graph $G$, respectively. 
For every node $v \in V$, we use $\delta(v)$ to denote the total weight of a shortest path (with respect to $w$) from $s$ to $v$; if $v$ cannot be reached from $s$ we adopt the convention that $\delta(v) = \infty$. Given that all edge weights are non-negative, we thus have $\delta(v) \in \mathbb{R}_{\ge 0} \cup \set{\infty}$.
Note that to solve the SSMTSP problem it is sufficient to compute the shortest path distances of all nodes $v \in V$ satisfying $\delta(v) \le \Dist$, where $\Dist$ is the minimum shortest path distance of a target node, i.e., 
$\Dist = \min_{t \in T} \delta(t).$
Once these distances are computed, the actual shortest path can be extracted in linear time $O(n+ m)$ by computing the \emph{shortest path tree} rooted at $s$ (see, e.g., \cite{cormen2009introduction} for more details). Throughout this paper, we assume that there is at least one target node in $T$ that is reachable from $s$.\footnote{Note that this can easily be checked in linear time $O(n+m)$, simply by running a \emph{breadth-first search (BFS)} (\cite{cormen2009introduction}).}

\subsection{\DijkstraPruning\ Algorithm}
As mentioned in the introduction, our algorithm combines an adaptation of Dijkstra's algorithm by~\cite{bast2003heuristic} (referred to as \DijkstraPruning) with an ML-prediction. We briefly review the adaptation here. 


We first describe the standard Dijkstra algorithm (referred to as \Dijkstra), adapted to many targets. 
\Dijkstra\ associates a \emph{tentative distance} $d(v)$ with every node $v \in V$ and maintains the invariant that $d(v) \ge \delta(v)$ for every $v \in V$. Initially, $d(s) = 0$ and $d(v) = \infty$ for all $v \in V$. The set of nodes is partitioned into the set of \emph{settled} and \emph{unsettled} nodes. Initially, all nodes are unsettled, and whenever the algorithm declares a node $v$ to be settled, its tentative distance is exact, i.e., $d(v) = \delta(v)$. The algorithm maintains a priority queue $\pq$ to keep track of the distance labels of the unsettled nodes $v \in V$ with $d(v) \neq \infty$. Initially, only the source node $s$ is contained in $\pq$. 
In each iteration, the algorithm removes from $\pq$ an unsettled node $u$ of minimum tentative distance, declares it to be settled and scans each outgoing edge $(u,v) \in E$ to check whether $d(v)$ needs to be updated; we also say that edge $(u,v)$ is \emph{relaxed} (pseudocode in Algorithm \ref{alg:relax}). The algorithm terminates when a node $u \in T$ becomes settled. 
In the worst case, \Dijkstra\ performs $n$ \del, $n$ \ins\ and $m$ \dec\ operations. Its running time crucially depends on how efficiently these operations are supported by the underlying priority queue data structure. 
In this context, \emph{Fibonacci heaps} introduced by \cite{fredman1987fibonacci} are the (theoretically) most efficient data structure, supporting all these operations in (amortised) time $O(m + n \log n)$. It is important to realise though that the actual time needed by the queue operations depends on the size (i.e., number of elements) of the priority queue. In general, a smaller queue size results in a better overall running time of the algorithm.

\DijkstraPruning\ works the same way as \Dijkstra, but additionally keeps track of an \emph{upper bound $B$} on the shortest path distance to a node in $T$.
Initially, $B = \infty$ and the algorithm lowers this bound whenever a shorter path to a node in $T$ is encountered. Crucially, $B \ge \Dist$ always, 
and as a consequence, each edge $(u,v) \in E$ that leads to a finite tentative distance $d(v)$ with $d(v) \ge B$ can be discarded from further considerations; we also say that edge $(u,v)$ is \emph{pruned}. 
The pseudocode of \DijkstraPruning\ is given in Algorithm \ref{alg:dijkstra-improved}.
Clearly, in the worst case \DijkstraPruning\ does not prune any edges. 
In particular, the worst-case running time of \DijkstraPruning\ remains $O(m + n \log n)$.

\begin{algorithm}[t]
 	$d(s) = 0$, $d(v) = \infty$ for all $v \in V \setminus \{s\}, B=\infty$ \algcom{tentative distances and pruning bound}
	$\pq.\ins(s, d(s))$ \algcom{priority queue}
	\While{not $\pq.\ety()$}{
		$u=\pq.\del()$ \algcom{$u$ becomes settled}
		\lIf(\tcp*[f]{stop when $u$ is target}){$u \in T$}{\textbf{STOP}}
		\For
		{$(u,v) \in E$}{
                 	$\tent = d(u) + w(u,v)$ \algcom{tentative distance of $v$}
			\lIf(\tcp*[f]{prune edge $(u,v)$}){$\tent > B$}{\textbf{continue}}
			\lIf(\tcp*[f]{update $B$}){$v \in T$}{$B=\min\{\tent, B\}$}
			$\rlx(u,v,\tent)$ \;
		} 
	} 
	\caption{$\DijkstraPruning(G,w,s, T)$}
    	\label{alg:dijkstra-improved}
\end{algorithm}

\begin{algorithm}[t]
	\If(\tcp*[f]{lower tentative distance}){$d(v) > \tent$}{
		\lIf(\tcp*[f]{add $v$ to $\pq$}){$d(v) = \infty$}{$\pq.\ins(v, \tent)$ \label{alg:relax:insert}\label{alg:relax:infinite-d}}
        		\lElse(\tcp*[f]{decrease priority of $v$}){$\pq.\dec(v, \tent)$}
		$d(v) = \tent$ \algcom{update distance of $v$}
	}
    \caption{$\rlx(u,v,\tent)$}
    \label{alg:relax}
\end{algorithm}

\subsection{Random Model}
\cite{bast2003heuristic} use the following random model to analyse the improved performance of \DijkstraPruning\ and show that the expected savings for these instances are significant, both analytically and empirically. 
The directed random graph instances are constructed using the \emph{Erd\"os-R\'enyi random graph model} by \cite{Gilbert1959}, also known as $G(n, p)$: there are $n$ nodes and each of the $n(n-1)$ possible (directed) edges is present independently with probability $p = c/n$, where $c$ is (roughly) the average degree of a node.
Further, each node $u \in V$ is chosen independently with probability $q = f/n$ to belong to the target set $T$, where $f$ is the expected number of target nodes in $T$.
The weight $w(e)$ of each edge $e$ is chosen independently uniformly at random from the range $[0,1]$.

\section{Dijkstra's Algorithm with Predictions}
\label{sec:DijkstraML}


\begin{algorithm}[t]
 	$d(s) = 0$, $d(v) = \infty$ for all $v \in V \setminus \{s\}$\algcom{tentative distances} 
	$B = \infty$, $P = \infty$, $X = []$, $i = 0$ \algcom{pruning bound, prediction and array to store trace} 
         $\pq.\ins(s, d(s))$, $R = \emptyset$  \algcom{$\pq$ and reserve set $R$}
	\While{\textrm{\bf not} $\pq.\ety()$ \textrm{\bf and} $\pq.\textsc{min-prio}() \le \Pred$}{\label{alg:dijkstra-learn:while}
		$u=\pq.\del()$ \algcom{$u$ becomes settled}
		\lIf(\tcp*[f]{stop when $u$ is target}){$u \in T$}{\textbf{STOP} \label{alg:dijkstra-learn:return}}
                 $i = i + 1$\;
                 \lIf(\tcp*[f]{extend trace}){$i\leq \len$}{$X[i] = (d(u), B)$ \label{alg:dijkstra-learn:updateX}}
                 \lIf(\tcp*[f]{get prediction}){$i = \len$}{$P = \alpha \cdot \prediction(X)$ \label{alg:dijkstra-learn:makep}}
                 \For
                 {$(u,v) \in E$}{
                 	$\tent = d(u) + w(u,v)$ \algcom{tentative distance of $v$}
			\lIf(\tcp*[f]{prune $(u,v)$}){$\tent > B$}{\textbf{continue}}\label{alg:dijkstra-learn:boundp}
			\lIf(\tcp*[f]{update $B$}){$v \in T$}{$B=\min\{\tent, B\}$}
			$\rlxprediction^{\star}(u,v,\tent, P)$ \algcom{call relax routine}
		}
	}	 
	\restart \label{alg:dijkstra-learn:restart} \algcom{call update prediction routine} 
	\textbf{continue} with while-loop
    \caption{$\DijkstraPrediction(G, w, s, T, \len, \alpha, \beta)$}
    \label{alg:dijkstra-learn}    
\end{algorithm}

%
Our basic idea is to further amplify the effect of the edge prunings by using a machine learning approach to obtain a prediction of the shortest path distance at an early stage. 
More concretely, suppose we have a \prediction\ algorithm which, based on the execution of the algorithm so far, computes an estimate of the shortest path distance $\Dist$. We can then call this algorithm after a few iterations to obtain a prediction $\Pred$ of $\Dist$ and use it to prune all edges that lead to a tentative distance larger than $\Pred$. 
There are three main advantages from which our approach can (potentially) benefit when compared to the algorithms \Dijkstra\ and \DijkstraPruning. Firstly, fewer queue operations may be performed because of the edges being pruned. Secondly, edge pruning might start after a few iterations only, potentially before having found any path to a target node and finally, queue operations may take less time because the size of the priority queue remains smaller. 


\subsection{Detailed Description of \DijkstraPrediction}

%

We elaborate on our algorithm \DijkstraPrediction\ (Algorithm \ref{alg:dijkstra-learn})  in more detail. 
The algorithm builds upon \DijkstraPruning, see Section~\ref{sec:preliminaries}.
The three new input parameters $\len, \alpha$ and $\beta$ will become clear below. 
During the first $\len$ iterations, an array $X$ is maintained for storing the \emph{trace} (as we term it) of the algorithm. 
In iteration $\len$, the constructed trace $X$ is then used to compute an initial prediction by calling the \textsc{Prediction} procedure, for which several alternatives are given in Section \ref{s:prediction_methods}.
The algorithm keeps track of both the bound $B$ on the smallest distance to a node in $T$ encountered so far and the current prediction $\Pred$. A scanned edge $(u,v)$ is not inserted into the priority queue, we call this \emph{pruning}, whenever its tentative distance $d(u) + w(u,v)$ exceeds $B$ or $P$.%
\footnote{There is a somewhat subtle point in the algorithm: Note that during the first $\len$ iterations the prediction $\Pred$ remains at $\infty$ as the trace is just being built. As a consequence, throughout this stage it could happen that nodes are inserted into the priority queue $\pq$, whose tentative distances are larger than the first prediction $\Pred$ (determined in iteration $\len$). After this stage, this is impossible due to the pruning. 
It is because of these nodes that we have to add the second condition to the while loop, which checks whether the minimum distance of a node in $\pq$ is less than the current prediction $\Pred$. If not, the \restart\ procedure has to be initiated to increase the prediction and add all relevant nodes to $\pq$.}

Ideally, we would like to come up with a \textsc{Prediction} procedure that provides a prediction $\Pred$ which comes close to the actual shortest path distance $\Dist$. In fact, both over- and underestimations of $\Dist$ can be harmful, though in different ways: If $\Pred$ overestimates $\Dist$ then edges which are irrelevant for the shortest path might not be pruned and the algorithm might perform redundant operations---which is undesirable. If $\Pred$ underestimates $\Dist$ then edges which are essential for the shortest path might be pruned and an incorrect solution might be returned---which is unacceptable. 
To remedy the latter, we equip our algorithm with an \restart\ procedure (Algorithm \ref{alg:dijkstra-smart}): If the prediction $\Pred$ turns out to be too small, it is increased by a factor $\beta > 1$ and the algorithm continues.
Clearly, such \restart\ procedures should not happen too often as this might reduce the efficiency of the approach, therefore, the initial prediction is slightly inflated by a factor $\alpha \ge 1$. 
By inflating the prediction in an \restart\ routine, nodes that were previously considered irrelevant, could potentially become relevant. Here a node is considered \emph{relevant} if its tentative distance is smaller than the updated prediction. We need to insert the nodes that have become relevant during the \restart\ routine into the priority queue, we call this a \emph{batch insertion}. During a batch insertion, we only insert a node if its tentative distance does not exceed the current upper bound $B$. 

We are able to efficiently execute a batch insertion by maintaining a set of \emph{reserve nodes}, $R$, during the algorithm. $R$ will contain all nodes which have a finite tentative distance, but have not been added to the priority queue because their tentative distance exceeds the prediction in the current trial. 
Maintaining set $R$ is done by using a different relax routine than \DijkstraPruning, namely \rlxprediction\ and by using a hand-tailored data structure, both on which we elaborate below.




\begin{algorithm}[t]
        \If(\tcp*[f]{lower tentative distance}){$d(v) > \tent$ }
        {
            \eIf(\tcp*[f]{$v$ reached for the first time}){$d(v) = \infty$}
            {
                \lIf(\tcp*[f]{add $v$ to $\pq$}){$\tent \le P$}{$\pq.\ins(v, tent)$}
		        \lElse(\tcp*[f]{add $v$ to $R$}){$R.\ins(v, tent)$}
           }
           {
                \lIf
                {$v\not\in R$}{$\pq.\dec(v, tent)$}
		        \Else(\tcp*[f]{$v$ is in $R$ already}){
                    \If(\tcp*[f]{$v$ remains in $R$}){$\tent > P$}
                    {
                        $R.\dec(v, tent)$
                        \textbf{continue}
                    }
                    $R.\rem(v)$ \algcom{move $v$ from $R$ to $\pq$}
                    $\pq.\ins(v, \tent)$ \;
                }
           }
           $d(v) = \tent$ \algcom{update distance of $v$}
       }
    \caption{$\rlxprediction(u,v,\tent,P)$}
    \label{alg:Relax-bound}
\end{algorithm} 
%
\begin{algorithm}[t]
        $P= \beta \cdot P$ \algcom{inflate prediction}
        \ForEach(\tcp*[f]{iterate over all relevant nodes in $R$ and do batch insertion}){$v\in R$}{ \label{alg:dijkstra-smart:start-batch-insert}
            \If{{$d(v) \leq B$}}{
             $R.\rem(v)$  \algcom{move $v$ from $R$ to $\pq$}
             $\pq.\ins(v, d(v))$              
           }
        } \label{alg:dijkstra-smart:end-batch-insert}
        
    \caption{\restart}
    \label{alg:dijkstra-smart}
\end{algorithm}


The \rlxprediction\ routine is similar to the standard \rlx\ routine (see Algorithm \ref{alg:Relax-bound} and Algorithm \ref{alg:relax}). The main difference is that the node $v$ is only inserted into the priority queue if its tentative distance is smaller than the prediction; otherwise, it is inserted into the reserve set $R$. 

By using a tailored data-structure for the reserve set, we can quickly execute the batch insertions. 
In this tailored data-structure we store nodes based on their tentative distance, like in the priority queue. 
However, unlike in the priority queue, the nodes are not \emph{sorted} based on this tentative distance. 
Instead, nodes are stored in several linked lists, which we call \emph{buckets}.
Each bucket has a bucket number $i\in \mathbb{N}$, and a node will be stored in bucket $i = 1, 2, \dots$ if and only if it has a tentative distance $\tent$ such that $\beta ^{i -1}\Pred < \tent \leq \beta ^ {i} \Pred$. 
Then, during the $i$'th batch insertion, simply all the nodes from bucket $i$ can be moved from $R$ into $\pq$.


\subsection{Correctness Proof}
An important point is that our algorithms is \emph{correct} in the sense that it 
\begin{enumerate}
\item terminates in polynomial time, and
\item computes an optimal solution to the SSMTSP problem. 
\end{enumerate}

The following theorem can be proven by relating the runs of \DijkstraPrediction\ with \restart\ and \DijkstraPruning. 

\begin{restatable}{theorem}{correctness}
\label{thm:correctness}
\DijkstraPrediction\ is correct and has a worst-case running time of $O(m + n \log n)$.
\end{restatable}

The proof of Theorem~\ref{thm:correctness} follows directly from the following invariant, which establishes a connection between \DijkstraPrediction\ with \restart\ and \DijkstraPruning, and from Lemma \ref{lemma-reserveset}, which establishes that the reserve set operations do not increase the worst-case running time.

\begin{restatable}{invariant}{invariantCorrectness}
\label{invariant-all}
\sloppy
Consider the runs of \DijkstraPrediction\ and  \DijkstraPruning\ on the same input instance. We use $d$, $\pq$ and $R$ to refer to the respective data structures in \DijkstraPrediction, and $d'$, $\pq'$ to the respective data structures in \DijkstraPruning.
The following properties are satisfied in each iteration: 
\setlength{\leftmargini}{3em}
\begin{itemize}
    \item[(P1)]  Both algorithms remove the same node $u$ from $\pq$ and $\pq'$, respectively. 
    \item [(P2)] The set of nodes in $\pq'$ can be partitioned into the set of nodes in $\pq$ and the set of nodes in $R$, with $d(v) > \Pred$ for all $v \in R$.
    \item [(P3)] The tentative distances are equal in both algorithms, i.e.,  $d(v)=d'(v)$ for all $v \in V$. 
\end{itemize}
\end{restatable}

\begin{proof}
We assume that both algorithms employ a consistent tie-breaking rule for nodes with similar distances.
It is easy to see that the invariant holds for the first iteration: the prediction is initialised to $\Pred = \infty$ and thus the algorithms do exactly the same, and $R$ remains empty. 
Now, suppose by induction that the invariant holds at the beginning of iteration $i \ge 1$. We argue that the invariant holds at the end of iteration $i$:

(P1) Suppose that node $u$ is deleted from $\pq$ in iteration $i$. By the condition in the while-loop in Algorithm \ref{alg:dijkstra-learn} it holds that $d(u) \leq P$, which together with (P2) gives that $d(u) < d(v)$ for all $v\in R$. By the \del\ operation, $d(u) < d(v)$ for all $v \in \pq$, so $d(u) < d(v)$ for all $v \in \pq \cup R$. Since $\pq\cup R=\pq'$ (because of (P2)), we have $d(u) < d(v)$ for all $v \in \pq'$. From (P3), it follows that $d'(u) < d'(v)$ for all $v \in \pq'$, which proves that the same node is deleted in \DijkstraPruning. 

(P2) We consider each queue operation executed by the algorithms in this iteration separately and argue that the claim remains true. 
Firstly, if the claim holds at the beginning of an iteration, then it still holds after the \del\ operation because the same node $u$ is deleted from $\pq$ and $\pq'$. 
Secondly, suppose that in iteration $i$ node $v$ is inserted into $\pq'$ because edge $(u,v)$ is relaxed. Then, before the insertion, $d'(v)=\infty$ and $d'(u) + w(u,v)<\infty$. Edge $(u,v)$ will also be relaxed in the \DijkstraPrediction\ algorithm. By (P3), we have $d(v)=\infty$ and $d(u) + w(u,v)<\infty$. This means that node $v$ is either inserted into $\pq$ or $R$. A node $v$ is only added to $R$ if $d(u) + w(u,v) > P$.
So the claim still holds after an insertion when $d(v)$ is set to $d(u) + w(u,v)$. 
Thirdly, suppose that in iteration $i$ the tentative distance of node $v$ in $\pq'$ is decreased because edge $(u,v)$ is relaxed. Then, before the tentative distance is decreased, $d'(v)<\infty$ and $d'(u) + w(u,v)<d'(v)$. Again, edge $(u,v)$ will also be relaxed in the \DijkstraPrediction\ algorithm, and by (P3) we have $d(v)<\infty$ and $d(u) + w(u,v)<d(v)$. This means that node $v$ will remain in $\pq$ if it was already there, and will be moved from $R$ to $\pq$ if $d(u) + w(u,v) \leq P$. In both cases, the claim will remain true after the \dec\ operation when $d(v)$ is set to $d(u) + w(u,v)$. Lastly, the property also remains true when $P$ is inflated in the \restart, procedure, since all $v \in R$ with $d(v)\leq P$ are moved to $\pq$ and removed from $R$.
    
(P3) Both algorithms remove the same node $u$, update the tentative distance of $u$'s neighbours based on the same condition and update it to the same value. So if the claim holds before the iteration, it will also hold at the end of the iteration.
\end{proof}

\begin{restatable}{lemma}{lemmaReserveSet}
\label{lemma-reserveset}
The worst-case running time of all operations on the reserve set $R$ is $O(m + n)$.
\end{restatable}
\begin{proof}
If we let $w_{\max}$ be the maximum weight of an edge in the graph, then the maximum number of buckets we need is at most $\log_{\beta}(\frac{n w_{\max}}{\Pred}) = \log(\frac{n w_{\max}}{\Pred})/\log(\beta)$.
We conclude that the \restart\ procedure is called is at most $\log(\frac{n w_{\max}}{\Pred})/\log(\beta)$ many times.

We analyse the complexity of three operations on the reserve set. 
Firstly, we can insert a node into the reserve set in constant time by calculating the bucket number with the given tentative distance, after which we can insert it into the correct linked list. 
Secondly, a decrease priority operation is done by deleting the node first, after which it is inserted with the updated priority.
A decrease priority can therefore also be performed in constant time. 
Lastly, in the $i$'th \restart\ procedure, all the nodes of the $i$'th bucket are deleted from the reserve list and inserted into the standard priority queue. 
Furthermore, we note that the number of insertions into $R$ is bounded by the number of nodes, and the total number of decrease priorities is bounded by the number of edges in the graph.
This proves that all operations on the reserve set can be done in $O(m + n)$ time. 
\end{proof}

\section{Lower Bounds on Savings}\label{s:lower-bound}


In this section, we derive our lower bounds on the savings of \DijkstraPrediction.
%
%
We assume without loss of generality that all edge weights are normalised such that $w_e \in [0,1]$ for all $e \in E$.
Further, for the sake of the analysis, we assume that \DijkstraPrediction\ starts with a prediction $\Pred = \Dist + \mdelta$, where $\mdelta \ge 0$ is the additive error of the prediction. 
%
%
In particular, we assume that the algorithm starts with this prediction from the start (while it actually only becomes available after $i_{0}$ many iterations); but given that $i_0$ is small, this assumption is negligible.
Note that this clearly captures the case when $\Pred$ is an \emph{overestimation} of the actual distance $\Dist$. 
But our analysis also provides bounds on the number of priority queue operations when $\Pred$ is an underestimation of $\Dist$. 
To see this, 
let $\Pred_{under}$ be an underestimation. When the algorithm is started with $\Pred_{under}$, \restart\ will iteratively inflate the prediction until it exceeds $\Dist$. Let $\Pred_{over}$ be this final prediction. Then the number of priority queue operations of \DijkstraPrediction\ with \restart\ starting with $\Pred_{under}$ is dominated by the number of priority queue queue operations of \DijkstraPrediction\ with \restart\ starting with $\Pred_{over}$. (Note that in the run with $\Pred_{under}$ each node can only be inserted at a later stage into the priority queue than in the run with $\Pred_{over}$.)

\subsection{Worst-case Instances}

%
We first show that \DijkstraPrediction\ might not prune a single edge, even if the prediction is perfect (i.e., $\mdelta = 0$). To see this, suppose that an adversary can fix the entire input instance. Consider the instance depicted in Figure \ref{fig:lb_example1} ($t$ being the only target node).
A moments thought reveals that a necessary condition for an edge $(u_1, v_i) \in E$ to be pruned is that it belongs to the set 
$
    L := \sset{(u,v)\in E }{ d(u) \le \Dist \text{ and } d(u) + w(u,v) >\Dist}
$ (indicated in grey). 
However, none of these edges will be pruned (neither because of $B$ nor $P$)---and this holds even if the prediction is perfect (i.e., $\mdelta = 0$). 
The point here is that the distance $d(u_1) = \mdelta$ of the start node $u_1$ is small, and thus the tentative distance of $v_i$ cannot exceed $P$. 

Next, fix some threshold $\theta \in (D-1+\mdelta, \Dist]$ and define the set of \emph{relevant} edges as
\begin{align}
    L_\theta := \sset{(u,v)\in E }{ \theta \le d(u) \le \Dist \text{ and } d(u) + w(u,v) >\Dist}. \label{def:ltheta} 
\end{align}%
Suppose that an adversary can fix the instance as before, but now we enforce it to have many relevant edges. 
Even then none of these relevant edges might be pruned.
To see this, consider the illustration depicted in Figure \ref{fig:lb_example2}. Here, the nodes $u_i$ removed from the priority queue are sorted by increasing distances $d(u_i)$ (from left to right); only the $u_i$'s are shown for which $d(u_i)>\Dist-1+\mdelta$. Also, only the relevant edges in $L_\theta$ are shown (indicated in grey).
Without further restrictions, the adversary can still fix the weights of the relevant edges in $L_\theta$ as indicated such that none of these edges will be pruned (neither because of $B$ nor $P$). 
Note that this holds even for perfect predictions and $\theta$ being arbitrarily close to $\Dist$ (i.e., $\mdelta = 0$ and $\theta \rightarrow \Dist$).

The conclusion to draw from these examples is that our algorithm might not save on priority queue operations at all in the worst case. In essence, the crux here is that even though we have perfect information about the shortest path distance, this is not enough to speed-up the construction of the optimality certificate. Note that for both instances the distances of all nodes need to be determined correctly to obtain such a certificate. Given that \Dijkstra\ already uses the minimum number of priority queue operations to compute such a certificate, we cannot hope to improve on this. 

\begin{figure}[t]
\fbox{\begin{subfigure}{0.48\linewidth}
    \centering
    \includegraphics[page=1,width=\linewidth]{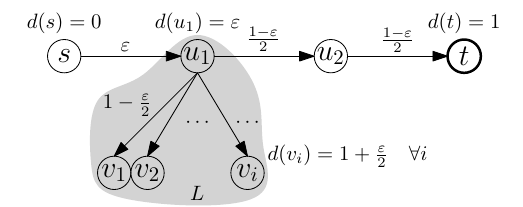}
    \caption{No pruning, even if predictions are perfect ($\mdelta=0$).}
    \label{fig:lb_example1}
\end{subfigure}}\hfill
\fbox{\begin{subfigure}{0.48\linewidth}
    \centering
    \includegraphics[page=2,width=\linewidth]{images/examples.pdf}
    \caption{No pruning, even if many relevant edges $(u_i, v_i) \in L_\theta$.} 
    \label{fig:lb_example2}
\end{subfigure}}
\label{fig:lb_examples}
\caption{Instances in which no savings can be expected from our algorithm}
\end{figure}

\subsection{Partial Random Instances}
Based on these examples,
it is clear that we need to further restrict the power of the adversary. We therefore introduce randomness in the instances to obtain a more fine-grained understanding of the savings achieved by our algorithm. Generally speaking, we will do this by enforcing randomness on some of the edges, while allowing the adversary to still control the rest of the input instance. 

The setup is as follows:
We suppose that the adversary can fix the set $Q = \set{u_{1}, \dots, u_{l}}$ of nodes that are removed from the priority queue, where $u_{i}$ is the node removed in iteration $i$, and the corresponding distances $d(u_1) \le \dots \le d(u_l)$ of these nodes.
Further, the adversary can fix all outgoing edges of the nodes in $Q$. Note that by doing so we implicitly allow that the adversary can fix the weights of certain edges to enforce this configuration (because these weights determine the order in which the nodes are removed from the priority queue). Crucially, however, we do not allow that the adversary can fix the weights of the relevant edges in $L_\theta$: 
the weight of each relevant edge in $L_\theta$ is random. 

Now, in this partially random setting, the weights for all edges $(u,v) \in L_\theta$ 
are independent and the distance labels $d(u) + w(u,v)$ are uniform on $[\Dist, d(u)+1]$ (\cite{bast2003heuristic}). This is because $(u,v)\in L_\theta$ implies $d(u) + w(u,v) \ge \Dist$ and $d(u)+w(u,v) \le d(u)+1$, but nothing else.

Given this partially random adversarial setting, we can lower bound the probability of pruning an edge in $L_\theta$. 
Like in the previous examples, the adversary still has enough power to setup the instance such that none of these edges are pruned based on the bound $B$. 
In contrast, each of these edges is pruned with positive probability due to the prediction $P$.
For the edges in $L_\theta$ we know that the necessary condition for pruning, $d(u) > \Dist+1-\mdelta$, holds. 
%

\begin{lemma}\label{lemma:succes}
Suppose $L_\theta$ is defined as above, with $\theta=\Dist + \gamma \mdelta -1$, for some $\gamma \in (1, \frac{1}{\mdelta}]$. Let $X_e$ be a random variable which equals 1 if edge $e\in L_\theta$ is pruned and zero otherwise. Then $
\pr{X_e = 1} \ge 1 - \frac{1}{\gamma}.$
\end{lemma}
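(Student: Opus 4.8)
The plan is to compute the pruning probability of a relevant edge directly from the conditional distribution of its induced tentative distance. First I would note that, for an edge $e=(u,v)\in L_\theta$, the adversary can arrange the rest of the instance (exactly as in the worst-case examples of Figures~\ref{fig:lb_example1} and~\ref{fig:lb_example2}) so that $e$ is never pruned on account of the bound $B$. Hence $e$ is pruned precisely when its tentative distance exceeds the prediction, i.e. when $d(u)+w(u,v) > \Pred = \Dist+\mdelta$, so that $\pr{X_e=1} = \pr{d(u)+w(u,v) > \Dist+\mdelta}$ in the partial random model.

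Next I would invoke the distributional fact recorded just before the lemma: conditioned on $e\in L_\theta$, the label $d(u)+w(u,v)$ is uniform on $[\Dist,\,d(u)+1]$. This holds because $w(u,v)$ is uniform on $[0,1]$ and independent of everything the adversary fixes, and membership in $L_\theta$ is equivalent to $d(u)+w(u,v)\ge \Dist$ together with the automatic bound $d(u)+w(u,v)\le d(u)+1$; conditioning thus merely restricts a uniform variable to a sub-interval. The pruning probability is then the ratio of the lengths of $[\Dist+\mdelta,\,d(u)+1]$ and $[\Dist,\,d(u)+1]$ (and equals $1$ in the degenerate case $\Dist+\mdelta \ge d(u)+1$, which only helps):
\[
\pr{X_e=1} \;=\; \frac{(d(u)+1)-(\Dist+\mdelta)}{(d(u)+1)-\Dist} \;=\; 1 - \frac{\mdelta}{d(u)+1-\Dist}.
\]

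Finally I would use that $e\in L_\theta$ forces $d(u)\ge\theta = \Dist+\gamma\mdelta-1$, whence $d(u)+1-\Dist \ge \gamma\mdelta > 0$ and therefore $\tfrac{\mdelta}{d(u)+1-\Dist}\le\tfrac1\gamma$, giving the claim. I would also remark that $\gamma>1$ is what keeps the denominator strictly positive (and the whole expression in $[0,1)$), while $\gamma\le 1/\mdelta$ is exactly the condition ensuring $\theta\le\Dist$, so that $L_\theta$ is a meaningful subset of $L$.

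I do not expect a genuine obstacle here: the statement reduces to a one-line computation with a uniform random variable. The only points needing a little care are the justification of the conditional uniformity on $[\Dist,\,d(u)+1]$ (which rests on the independence and uniformity of the weights of the relevant edges, i.e. the very definition of the partial random model) and the bookkeeping of the degenerate case in which the prediction already exceeds the entire support $[\,d(u),\,d(u)+1\,]$, where the edge is pruned with probability one and the bound holds trivially.
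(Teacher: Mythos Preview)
Your proof is correct and follows essentially the same route as the paper: use the conditional uniformity of $d(u)+w(u,v)$ on $[\Dist,\,d(u)+1]$, compute the pruning probability as $1-\tfrac{\mdelta}{d(u)+1-\Dist}$, and bound the denominator below by $\gamma\mdelta$ via $d(u)\ge\theta$. The only minor difference is that your ``degenerate case'' $\Dist+\mdelta\ge d(u)+1$ cannot actually occur for $e\in L_\theta$ (since $d(u)\ge\theta$ and $\gamma>1$ force $d(u)+1>\Dist+\mdelta$), so that discussion is harmless but unnecessary.
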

\begin{proof}
Let $e=(u,v)\in L_\theta$ be a relevant edge and let $d(v)$ be the distance of $v$ at the end of the algorithm. Then $e$ is pruned whenever the tentative distance $\tent:=d(u) + w(u,v)$ exceeds the prediction $P$. As argued before, $e\in L_\theta$ implies that $d(u) + w(u,v)$ is uniformly distributed on $[\Dist, d(u) + 1]$. 
Furthermore, note that $1-(\Dist-d(u)) \ge \gamma \mdelta$ follows from $d(u) \ge \theta$ and the definition of $\theta$ and note that $\Dist +\mdelta$ is contained in the interval $[\Dist, d(u) + 1]$, since $\Dist + \mdelta \le \Dist + \gamma\mdelta \le d(u) + 1$. We can simply apply the cumulative distribution function for the uniform distribution: 
\[
\pr{X_e = 1}
= \pr{\tent \ge \Pred } 
= \pr{d(u) + w(u,v) \ge \Dist + \mdelta}
=  \frac{d(u) + 1 - (\Dist + \mdelta) }{d(u) + 1 - \Dist} 
\ge 1-\frac{1}{\gamma}.
\]
\end{proof}
We give some intuition for the lemma: 
The number of edges in $L_\theta$ is largest when $\gamma$ is close to 1, and the lemma shows there is a small positive probability for each of those to be pruned by the prediction.   
As $\gamma$ increases to $\frac{1}{\mdelta}$, the threshold $\theta$ approaches $\Dist$ and the size of $L_\theta$ decreases. So the number of relevant edges decreases, while the probability that each such edge is pruned increases.

Define $X$ as the total number of pruned edges in $L_\theta$, i.e., $X = \sum_{e \in L_\theta} X_e$. 
\begin{restatable}{theorem}{prunedEdgesWhp}
\label{thm:whp}
\label{thm:expected-savings}
Suppose $L_\theta$ is defined as above, with $\theta=\Dist + \gamma \mdelta -1$, for some $\gamma \in (1, \frac{1}{\mdelta}]$.
Then the expected number of pruned edges in $L_\theta$ is 
$
\pe{X} \ge (1 - \tfrac{1}{\gamma}) |L_\theta| .
$
Further, if $(1 - \frac{1}{\gamma}) |L_\theta|  \ge 8\ln n$, then 
$X \ge \frac12(1 - \frac{1}{\gamma}) |L_\theta|$ with high probability, i.e., 
$\pr{X \ge \frac{1}{2}(1 - \frac{1}{\gamma})|L_\theta|} \ge 1-\frac{1}{n}$.
\end{restatable}
\begin{proof}
By linearity of expectation, it follows from Lemma \ref{lemma:succes}:
\[
\pe{X} = \pel{\sum_{e \in L_\theta} X_e} = \sum_{e \in L_\theta} \pe{X_e} = \sum_{e \in L_\theta} \pr{X_e = 1}. 
\]

Note that $X = \sum_{e \in L_\theta} X_e$ is a sum of $|L_\theta|$ independent random variables. Let $\mu := \pe{X}$ be the expected value of $X$. The following (standard) Chernoff bound holds for every $\delta \in (0,1)$:
\[
\pr{X \le (1-\delta) \mu} \le e^{-\mu\delta^2/2}. 
\]
By choosing $\delta = \frac{1}{2}$, we obtain 
\begin{align*} 
\prl{X \le \frac12 \mu} \le e^{-\mu/8} \le \frac{1}{n},
\end{align*} 
where the second inequality holds because $\mu = \pe{X} \ge (1 - \frac{1}{\gamma}) |L_\theta| \ge 8 \ln n$ by our assumption. 

Using this, we conclude that 
\[
\prl{X \ge \frac{1}{2}\left(1 - \frac{1}{\gamma}\right)|L_\theta|}
\ge 
\prl{X \ge \frac12 \mu} 
\ge 
1-\frac{1}{n}. 
\]
\end{proof}

\subsection{Random model}

We consider random instances constructed according to the model introduced in Section \ref{sec:preliminaries}. For these instances, we show that our \DijkstraPrediction\ algorithm saves a significant number of queue operations compared to \DijkstraPruning. 

In the random model, it is not straightforward to compute the size of $L_\theta$. Therefore, we will consider a specific subset of $L_\theta$ for which we \emph{are} able to compute the size. More specifically, we only consider the edges $(u,v)$ from $L_\theta$ for which the \emph{final} distance $d(v)$ is larger than $\Dist$, i.e., edges from $L_\theta$ that lead to a node $v$ which is not in $Q$.\footnote{Note that $L_\theta$ may contain edges $(u,v)$ with tentative distance $d(u) + w(u,v) > D$, but whose final distance $d(v) \le D$. These are relevant edges having both endpoints in $Q$ that might be pruned. However, we do not account for these savings in our analysis here.}
Note that there could be multiple edges in $L_\theta$ that lead to such a node $v\not\in Q$. In that case, we only consider the (unique) edge in $L_\theta$ which has led to an insertion of $v$ into the priority queue in the standard \Dijkstra\ algorithm (disregarding edges that have led to a decrease priority operation). We use $L'_\theta$ to denote this subset of $L_\theta$:
\[
    L'_\theta := \sset{(u,v)\in L_\theta}{\text{$(u,v)$ leads to insertion of $v$ in \pq\ of \Dijkstra}}.
\]


In the \Dijkstra\ algorithm, all the end nodes of edges in $ L_\theta'$ are inserted in the priority queue, but they are never removed. 
\DijkstraPrediction\ can actually save a number of these insert operations by pruning the edges in $L_\theta'$.
We will lower bound these savings by computing an upper bound for the number of these nodes which are still inserted in our \DijkstraPrediction\ algorithm. 
Consequently, all the edges which lead to nodes which are not inserted in \DijkstraPrediction\ are pruned. 

First, we will prove the following Key Lemma which will help us to upper bound the probability of inserting an edge below. Below, we use $[k]$ to denote the set $\set{1,\ldots,k}$.


\begin{restatable}[Key Lemma]{lemma}{keyLemma}\label{Lemma:uniformvars}
Let $X_j, j =1,\ldots,k+1$, be $k+1$ uniform random variables, with $X_j$ uniform on $[a,b_j]$ and $b_1 \le \cdots \le b_{k+1}$. Let $P>0$ be a real number, which is contained in all intervals, i.e.,  $a<P<b_1$. Let $\event{1}$ be the event $\{X_{k+1} \le X_j, \forall j\in [k]\}$ and let $\event{2}$ be the event $\{X_{k+1} \le P\}$. Then:
$
\textstyle
\pr{\event{1} \wedge \event{2} } \le \frac{1}{k+1} ( 1- (1- \frac{P-a}{b_k-a})^{k+1}).
$
\end{restatable}
\begin{proof}
We will upper bound the probability by conditioning on values of $X_{k+1}$, using the law of total probability and applying the density function of $X_{k+1}$: $f_{X_{k+1}}(s) = \frac{1}{b_{k+1}-a}$.
\begin{align*}
    \pr{ \event{1} \wedge \event{2}  }  & = \int_a^{b_{k+1}} \pr{\event{1} \wedge \event{2}  \pcond X_{k+1} = x} f_{X_{k+1}}(x) dx\\
    &=  \frac{1}{b_{k+1}-a} \int_a^{b_{k+1}} \pr{\event{1} \wedge \event{2}  \pcond X_{k+1} = x}  dx
\end{align*}
Since $\pr{\event{1} \wedge \event{2}  \pcond X_{k+1} = x} = 0$ if $x > P$, we can write:
\begin{align*}
    \pr{\event{1} \wedge \event{2}  \pcond X_{k+1} = x} &= \pr{\event{1} \wedge \event{2}\pcond \{X_{k+1} = x\} \wedge \{x\le P \}}\\
    &=\pr{\{X_{k+1} \le X_j, \forall j \}\wedge \{X_{k+1} \le P\}\pcond \{X_{k+1} = x\} \wedge \{x\le P \}}\\
    &= \pr{x \le X_j, j=1,\ldots,k \pcond \{X_{k+1} = x\} \wedge \{x\le P \}}\\
    &= \pr{x \le X_j, j=1,\ldots,k \pcond x\le P }\\
    &= \prod_{j=1}^k \pr{x \le X_j \pcond x\le P }\\
    &= \prod_{j=1}^k \left( 1- \frac{x-a}{b_j-a} \right)\\
    &\le \left( 1- \frac{x-a}{b_k-a} \right)^k.
\end{align*}
The third equality holds because the conditioning already implies that $X_{k+1} \le P$. The fourth equality holds since the value of $X_j, j=1,\ldots,k$ is independent of the value of $X_{k+1}$. Thereafter we use that all the $X_j$'s are identically distributed, and we use the cumulative distribution function of the uniform distribution. 
In the last inequality we exploit that $b_j-a\le b_k-a$ for all $j$.  
We can use this, together with $\frac{b_k-a}{b_{k+1}-a} \le 1$, to upper bound the expectation of $\event{1}\wedge \event{2}$:
\begin{align*}
    \pr{ \event{1} \wedge \event{2}  }  & \le \frac{1}{b_{k+1}-a} \int_a^{P} \left( 1- \frac{x-a}{b_k-a} \right)^k  dx\\
      & = \frac{1}{b_{k+1}-a} \left[-\frac{b_k-a}{k+1} \left(1- \frac{x-a}{b_k-a}\right)^{k+1} \right]_a^P \\
      & = \frac{1}{b_{k+1}-a} \left[-\frac{b_k-a}{k+1} \left(1- \frac{P-a}{b_k-a}\right)^{k+1} + \frac{b_k-a}{k+1}\right] \\
      & = \frac{b_k-a}{b_{k+1}-a} \frac{1}{k+1}\left[1- \left(1- \frac{P-a}{b_k-a}\right)^{k+1}\right] \\
      & \le \frac{1}{k+1}\left[1- \left(1- \frac{P-a}{b_k-a}\right)^{k+1}\right]. 
\end{align*}
\end{proof}

We will continue to lower bound the expected number of end nodes of edges in $L_\theta'$ which are inserted in the \DijkstraPrediction\ algorithm, despite the prunings. We call this quantity $\inrp_\theta$.
We condition on the event $E_l$, 
which implies not only the adversarial setting introduced in the previous section, but also that the size of $L_\theta'$ equals $l$. 


\begin{restatable}{theorem}{INRPtheorem}\label{thm:INRP}
Suppose $L_\theta'$ is defined as above, with $\theta=\Dist + \gamma \mdelta -1$, for some $\gamma \in (1, \frac{1}{\mdelta}]$. Let $\inrp_\theta$ be the number of end nodes of edges in $L_\theta'$ which are inserted but never removed in the \DijkstraPrediction\ algorithm.  Under the conditioning of the event $E_l$, i.e., $|L_\theta'|=l$ and the adversarial setting, we have that:
$
\textstyle
\pe{\inrp_\theta\pcond E_l} \le \frac{1}{q} ( 1 + \ln(\frac{lq}{\gamma})).
$
\end{restatable}
\begin{proof}
Let $l$ be the size of $L_\theta'$ and let $e_1=(u_1,v_1), e_2=(u_2,v_2), \ldots, e_l=(u_l,v_l)$ be all the edges in $L_\theta'$. Note that there might be repetitions in the $u_i$'s, but all the $v_i$'s are distinct. For $i=1,\ldots,l$, define $X_i=d(u_i) + w(e_i)$. We observed in the previous section that for all edges $e_i$ in $L_\theta'$ it holds that $X_i$ is random uniform on $[\Dist, d(u_i)+1]$. 

In \DijkstraPruning\, $e_i$ leads to an insertion only if $X_i$ is smaller than $X_j$ for every free $v_j$, with $j<i$. Suppose that there are $k$ of these free $v_j$'s preceding $v_i$ in the endpoints of $L_\theta'$. In \DijkstraPrediction, an extra condition must be met, namely that $X_i$ does not exceed the prediction. To lower bound the expectation of $\inrp_\theta$, we partition over $k$, the number of free $v_j$ preceding $v_i$: 
$$
\pe{\inrp_\theta\pcond E_l} \le \sum_{1 \le i \le l} \sum_{0 \le k < i} \binom{i-1}{k} q^k (1-q)^{i-1-k} \pr{\{X_i \le X_j, \forall j\in [k]\} \wedge\{ X_i \le P\}}
$$
To be able to apply our Key Lemma (Lemma \ref{Lemma:uniformvars}), we need that the variables $X_j$ must be random uniform. We have already shown that they are random uniform on $[\Dist, d(u_j) +1]$, for which the upper bounds increase as $j$ increases. Moreover, we need that $P<b_1$, which holds since for all edges in $L_\theta'$ we have $d(u_i) + 1 > P$. Therefore, we can apply our Key Lemma to upper bound the probability in the sum above by 
\begin{align*}
 \textstyle \frac{1}{k+1}\left[1- \left(1- \frac{P-\Dist}{d(u_k)+1-\Dist}\right)^{k+1}\right] 
    \textstyle \le \frac{1}{k+1}\left[1- \left(1- \frac{1}{\gamma}\right)^{k+1}\right],
\end{align*}
which gives
\begin{align}
\pe{\inrp_\theta\pcond E_l} 
&  \le \sum_{1 \le i \le l} \sum_{0 \le k < i} \binom{i-1}{k} q^k (1-q)^{i-1-k} \frac{1}{k+1} \label{eq:line1} \\
&   - \sum_{1 \le i \le l} \sum_{0 \le k < i} \binom{i-1}{k} q^k (1-q)^{i-1-k} \frac{1}{k+1} \label{eq:line2} \left( 1-\frac{1}{\gamma}\right)^{k+1}.
\end{align}

We know from \cite{bast2003heuristic} that \eqref{eq:line1} is equal to 
$\sum_{1 \le i \le l} \frac{1}{iq} (1 - (1-q)^i)$.
We will use similar techniques to obtain such an expression for \eqref{eq:line2}. 
As in \cite{bast2003heuristic}, we use $\binom{i-1}{k} \frac{1}{k+1} = \frac{1}{i}\binom{i}{k+1}$, and then use the binomium of Newton to rewrite the sum: 
\begin{align*}
&\eqref{eq:line2}=\sum_{1 \le i \le l} \frac{1}{iq} \sum_{0 \le k < i} \binom{i}{k+1} \left(q\left(1-\frac{1}{\gamma}\right)\right)^{k+1} (1-q)^{i-(k+1)} \\
&=\sum_{1 \le i \le l} \frac{1}{iq} \left[ \sum_{0 \le k \le i} \binom{i}{k} q^k\left(1-\frac{1}{\gamma}\right)^{k} (1-q)^{i-k} - (1-q)^i\right]=\sum_{1 \le i \le l} \frac{1}{iq} \left[ \left(1-\frac{q}{\gamma}\right)^i - (1-q)^i\right].
\end{align*}

Combining these two bounds, we obtain 
\begin{align*}
\pe{\inrp_\theta\pcond E_l} 
& \le \sum_{1 \le i \le l} \frac{1}{iq} \left[1 - (1-q)^i -  \left(1-\frac{q}{\gamma}\right)^i + (1-q)^i\right]
= \sum_{1 \le i \le l} \frac{1}{iq} \left[1 -  \left(1-\frac{q}{\gamma}\right)^i \right].
\end{align*}
As in \cite{bast2003heuristic}, we split the sum at a yet to be determined index $i_0$. For $i<i_0$, we estimate $(1-(1-\frac{q}{\gamma})^i)\le \frac{iq}{\gamma}$, and for $i\ge i_0$, we use $(1-(1-\frac{q}{\gamma})^i)\le 1$. We obtain:
\begin{align*}
\pe{\inrp_\theta\pcond E_l} & \leq \frac{i_0}{\gamma} +  \frac{1}{q} \sum_{i_0 \le i \le l} \frac{1}{i} \approx \frac{i_0}{\gamma} + \frac{1}{q} \ln \left( \frac{l}{i_0} \right) \leq  \frac{1}{q} \left( 1 + \ln\left(\frac{lq}{\gamma}\right)\right),
\end{align*}
where the last inequality follows by noting that $\frac{i_0}{\gamma} + \frac{1}{q} \ln ( \frac{l}{i_0} )$ attains its minimum with respect to $i_0$ at $i_0=\frac{\gamma}{q}$.
\end{proof}

Now, suppose $\Dist$ lies in $[0,1-\mdelta)$ and let $\gamma$ be $\frac{1-\Dist}{\mdelta}$, which makes $\theta$ equal to 0. 
This means that all the edges which lead to nodes that are inserted but not removed by \Dijkstra\ are in the set $L_\theta'$. 
Said differently, the size of $|L_\theta'|$ is equal to the number of nodes that are inserted but never removed in the priority queue by the \Dijkstra\ algorithm. 
\cite{bast2003heuristic} estimate the expected value of this quantity, conditional on that many nodes are reachable from $s$ (“$R$ is large”). We summarise their findings in the following proposition. 

\begin{proposition}[\citet{bast2003heuristic}]\label{prop:Bastetal}
Consider an instance from the random model introduced in Section \ref{sec:preliminaries}. Let $R$ be the number of reachable nodes from $s$ in the random graph. Then if $c\ge 8$ and $f\ge 4 \ln(n)$, it holds that $R$ is large, i.e. $R \geq (1-\delta)\alpha n$, for some $\alpha$ such that $\alpha = 1-\exp(-c\alpha)$ and $\delta$ small like 0.01. Moreover, the expected number of nodes that are inserted but never removed in the priority queue by the \Dijkstra\ algorithm, given that $R$ is large, is approximately:
$$\pe{INRS \pcond R \text{ is large}} \approx \frac{c-1}{q}.$$
\end{proposition}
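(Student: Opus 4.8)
The plan is to split the proposition into two nearly independent parts: (a) the concentration statement $R \ge (1-\delta)\alpha n$ after conditioning on the search from $s$ not dying out, which is standard random-graph theory and I would dispatch quickly; and (b) the estimate $\pe{INRS \pcond R \text{ is large}} \approx (c-1)/q$, which is where the substance lies. For (a), I would view the set reachable from $s$ in the directed $G(n,c/n)$ model as the output of a breadth-first exploration whose frontier sizes are sandwiched between Galton--Watson processes with $\mathrm{Binomial}(n-O(R), c/n)$ offspring, hence essentially $\mathrm{Poisson}(c)$. Since $c \ge 8 > 1$ the process is supercritical with survival probability $\alpha = 1 - e^{-c\alpha}$, a constant bounded away from $0$ (and close to $1$ when $c\ge 8$, which is why the conditioning event is benign); conditioning on the exploration not dying out and applying a standard Azuma/Chernoff argument to the exploration (equivalently, the discrete-duality principle for random digraphs) yields $R = (1\pm\delta)\alpha n$ with probability $1-n^{-\Omega(1)}$. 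I would also record that $f \ge 4\ln n$ makes the reachable set contain a target with probability $1-n^{-\Omega(1)}$, since conditioned on $R \ge (1-\delta)\alpha n$ one has $\prl{T \cap \set{v : \delta(v) < \infty} = \emptyset} \le (1-q)^R \le e^{-(1-\delta)\alpha f}$; this is what makes \Dijkstra\ actually stop at a target.

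For (b), let $u_1, u_2, \dots$ be the nodes in the order \Dijkstra\ extracts them (run ignoring targets), and let $S$ be the index of the first target among them, so that \Dijkstra\ extracts $u_1,\dots,u_S$ and relaxes the out-edges of exactly $u_1,\dots,u_{S-1}$. I would run the whole analysis through the principle of deferred decisions, exposing the out-edges (with uniform weights) of $u_i$ only at the step where $u_i$ is settled: by independence of the edge variables these are i.i.d.\ $\mathrm{Bernoulli}(c/n)$ over the other $n-1$ vertices and independent of the history (which involves only the out-edges of $u_1,\dots,u_{i-1}$), and since target membership is independent of all edges, each $u_i$ is independently a target with probability $q$. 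Hence $S$ is $\mathrm{Geometric}(q)$ truncated at $R$, and $qR \ge (1-\delta)\alpha f \to \infty$ gives $\pe{S \pcond R \text{ is large}} = \tfrac1q(1 - \pe{(1-q)^R}) = \tfrac1q(1+o(1))$ and likewise $\pe{S^2 \pcond R \text{ is large}} = \Theta(1/q^2)$. Writing $N^+(A)$ for the out-neighbourhood of a vertex set $A$ and $A = \set{u_1,\dots,u_{S-1}}$, I observe that the inserted-but-never-removed vertices are exactly $N^+(A)\setminus\set{u_1,\dots,u_S}$, and that $N^+(A)\cap\set{u_1,\dots,u_S}$ contains $u_2,\dots,u_S$ (each reached from its tree predecessor, which lies in $A$) plus possibly $s$, so $INRS = |N^+(A)| - (S-1) + O(1)$.

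It remains to compute $\pe{|N^+(A)|}$. Conditioning on $S=k$ and exposing the out-edges of $u_1,\dots,u_{k-1}$ one vertex at a time, the expected size $m_i$ of the growing union obeys the recursion $m_i = m_{i-1}(1-c/n)+c$ with $m_0=0$, hence $m_{k-1} = n(1-(1-c/n)^{k-1})$; since $k-1 = \Theta(1/q) = \Theta(n/f)$ and $c$ is constant, $(k-1)c/n = \Theta(c/f) \to 0$, so $m_{k-1} = c(k-1)(1+o(1))$. Therefore $\pe{INRS \pcond S=k} = (c-1)(k-1)(1+o(1))$, and averaging over $k$ gives $\pe{INRS \pcond R \text{ is large}} = (c-1)\,\pe{S-1 \pcond R \text{ is large}}\,(1+o(1)) = (1+o(1))(c-1)/q$.

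The main obstacle will be justifying that conditioning on the observed extraction order and on $S=k$ does not distort the exposed out-edges. The resolution is that the order and the stopping time $S$ are measurable functions of (i) the target labels, which are independent of the graph, and (ii) the out-edge vectors and weights of $u_1,\dots,u_{k-1}$, each revealed fresh given the earlier ones; the only residual conditioning constrains edge \emph{weights} (so as to reproduce the observed order) and never edge \emph{existence}, so the $\mathrm{Bernoulli}(c/n)$ structure driving the union-size recursion survives intact. The rest is routine bookkeeping: bounding the contribution of relaxed edges that collide or point back into $\set{u_1,\dots,u_S}$, which is $O(c^2 S^2/n)$ in expectation and hence $o((c-1)/q)$ because $\pe{S^2 \pcond R \text{ is large}} = \Theta(1/q^2)$ and $f \to \infty$; the first-order Taylor error in $m_{k-1} \approx c(k-1)$, of the same lower order; and the $n^{1-\Omega(1)}$-weighted contribution of the negligible event that no target is reachable.
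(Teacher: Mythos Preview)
The paper does not prove this proposition at all: it is stated explicitly as a summary of results from \citet{bast2003heuristic} and is invoked as a black box in the subsequent analysis. So there is no ``paper's own proof'' to compare against; your proposal is a self-contained argument for a result the authors simply import.

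That said, your approach is sound and is in the spirit of the original Bast et al.\ analysis: the decomposition $INRS = |N^+(A)| - (S-1) + O(1)$, the geometric law of $S$ via independence of target labels, and the recursion $m_i = m_{i-1}(1-c/n)+c$ for the expected size of the growing out-neighbourhood are all the right moving parts, and the first-order Taylor step $m_{k-1} \approx c(k-1)$ is justified because $kc/n = \Theta(c/f) \to 0$ under $f \ge 4\ln n$. One clarification is worth making, though. Your sentence ``the only residual conditioning constrains edge weights \dots\ and never edge existence'' is not literally true: conditioning on the extraction order $u_1,\dots,u_k$ \emph{does} pin down the existence of some edges (e.g., the tree edge into each $u_j$). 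What actually makes the recursion go through is that you never condition on the specific identities $u_1,\dots,u_k$ at all---you only condition on $S=k$, which (given $R$ large) is determined by the target labels and hence is independent of the edge indicators; the deferred-decisions step then says that when $u_i$ is settled its out-edge vector is fresh $\mathrm{Bernoulli}(c/n)^{n-1}$ regardless of what $u_1,\dots,u_{i-1}$ turned out to be. Stating it this way closes the gap and also makes transparent why the collision and back-edge corrections are $O(c^2\,\pe{S^2}/n) = O(c^2/(q^2 n)) = o((c-1)/q)$, as you assert at the end.
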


By exploiting this proposition, we can drop the dependency on the size of $L_\theta$ and we obtain the following theorem. 
\begin{restatable}{theorem}{finalSavings}
\label{thm:finalSavings}
Suppose $\Dist$ lies in $[0,1-\mdelta)$. Let $\inrp$ be the number of nodes that are inserted but never removed by \DijkstraPrediction. 
Then 
$
\textstyle\pe{\inrp\pcond  R \text{ is large}\,} \le  \frac{1}{q} \left( 1 + \ln(c-1) - \ln\left(\frac{1-\Dist}{\mdelta}\right)\right).$
\end{restatable}
\begin{proof}
We can upper bound the expected value of $\inrp_\theta$ given in Theorem~\ref{thm:INRP} as follows: Since $\ln(\frac{lq}{\gamma})$ is a convex function of $l$, we can replace $l$ simply by the expectation of $\inrs$ (using Proposition~\ref{prop:Bastetal}).
We obtain 
\begin{align*}
\textstyle\pe{\inrp\pcond  R \text{ is large}\,} 
&\textstyle\le \frac{1}{q} \left( 1 + \ln\left(\frac{(c-1)q\mdelta}{q(1-\Dist)}\right)\right)
= \frac{1}{q} \left( 1 + \ln(c-1) - \ln\left(\frac{1-\Dist}{\mdelta}\right)\right).
\end{align*}
\end{proof}

Let $\inrr$ denote the number of nodes that are inserted but never removed by \DijkstraPruning. It is shown in \cite{bast2003heuristic} that $\pe{\inrr\pcond  R \text{ is large}\,} \le \frac{1}{q} \left( 1 + \ln(c-1)\right).$
That is, compared to the \DijkstraPruning\ algorithm, our \DijkstraPrediction\ algorithm saves  $\frac{1}{q}\ln(\frac{1-\Dist}{\mdelta})$ insertions of such nodes. 
So even though \DijkstraPruning\ already saves a significant number of insertions, \DijkstraPrediction\ is able to further improve on this. 
Naturally, these savings grow whenever the prediction becomes more accurate and $\mdelta$ decreases. 

In our experiments, we consider random instances with $n=1000$, $c=8$ and $q=0.02$. For these instances, $\Dist$ is approximately 0.55. With a prediction $P=\Dist + \mdelta$ which overestimates $\Dist$ by at most $\mdelta = 0.1$ (which seems reasonable from the experiments), the expected number of $\inrp$ of \DijkstraPrediction\ is at most 63. 
In comparison, the expected number of $\inrs$ of \Dijkstra\ is 350; so our algorithm saves at least 287 of these insertions. 
The expected number of $\inrr$ of \DijkstraPruning\ is at most 137; our algorithms significantly improves upon this by exploiting the prediction.


\section{Prediction Methods}\label{s:prediction_methods}

The \prediction\ algorithm used in our algorithm \DijkstraPrediction\ can be obtained in numerous ways. Below, we explain how we obtain a prediction algorithm based on a machine learning approach. We elaborate on two different machine learning models and compare them to a benchmark prediction. Moreover, two alternative prediction methods based on breadth-first search (BFS) are given.
%
%

\subsection{ML-based Predictions}

In order to make a prediction after $\len$ iterations, we need to be able to describe the current optimisation run by means of some characteristic features. One of the challenges here is to come up with features that capture the essence of the current run such that they can be used by the machine learning model to make a good prediction of the shortest path distance. We do this by keeping track of a lower and upper bound on the shortest path distance in each iteration. More precisely, in iteration $i \le \len$, the distance $d(u)$ of the node $u$ extracted from the priority queue serves as the lower bound $d_i$ and the current value of the pruning bound $B$ is used as the upper bound $B_i$. The resulting sequence $X = ((d_1, B_1), (d_2, B_2), \dots, (d_{\len}, B_{\len}))$ of these lower and upper bounds for the first $\len$ iterations then constitutes what we call the \emph{trace} of the algorithm. 

A \emph{training sample} and \emph{target} for the machine learning algorithm then consists of the trace $X$ and the corresponding shortest path distance $\Dist$, respectively.
The set of samples for the machine learning models can be created by executing a run of \DijkstraPruning\ on each problem instance of the training set.
During this run, both the trace $X$ and the final shortest path distance $\Dist$ need to be stored. Before the traces are used to train the machine learning models, we normalise each feature by subtracting the mean and divide by the standard deviation.
To prevent blowing up the mean value of the upper bound feature, all bounds $B_i$ which are equal to the initial value of $B = \infty$ are set to $0$. 

We implemented and compared two standard machine learning models, namely a neural network model and a linear regression model. 
The neural network model that we use is a straightforward multilayer perceptron network consisting of two hidden layers, for which we optimise the number of nodes per layer by a \emph{$k$-fold cross validation} (see, e.g., \cite{Refaeilzadeh2009} for more details). 
To verify whether anything has been learned by these models at all, the results for these models are compared with a straightforward benchmark prediction. 
This benchmark prediction, which is independent of the instance, is computed by taking the average of the shortest path distance for each instance in the training set.

\subsection{BFS-based Predictions}

As an alternative to the machine learning models given in Section \ref{s:prediction_methods}, we implement two prediction methods that are based on breadth-first search (BFS). For each instance, we can simply run a BFS from the source node $s$ to determine a path $P^{\textit{BFS}}$ to any of the target nodes having the smallest number of edges. 
We also call $P^{\textit{BFS}}$ a \emph{BFS-path} and use $L^{\textit{BFS}}$ to denote its length (i.e., number of edges). 
Note that, equivalently, $L^{\textit{BFS}}$ is the shortest path distance to any of the target nodes if all edge weights are set to 1.
We use this BFS-path to derive two different predictions of the actual shortest path distance $\Dist$: 
(i) \textsc{bfs}: We define the prediction $\Pred$ as $L^{\textit{BFS}} \cdot \mu_w$, where $\mu_w$ is the expected edge weight.  
(ii) \textsc{$w$-bfs}: We define the prediction $\Pred$ as the sum of the \emph{actual} weights on $P^{\textit{BFS}}$, i.e., $\Pred = \sum_{e \in P^{\textit{BFS}}} w(e)$. 
Note that that the latter prediction might overestimate but never underestimate the actual distance $\Dist$. 


\section{Experimental Findings}
\label{sec:experiments}

In this section, we present our experimental findings. We first introduce our experimental setup and then discuss the results and insights we obtained from the experiments. 

\subsection{Experimental Setup}
We generated 100,000 instances of the SSMTSP problem using the random model described in Section~\ref{sec:preliminaries} with $n=1000$ nodes, edge probability $p = c/n$ with $c = 8$, and target probability $q = f/n$ with $f = 20$. The edge weights were chosen independently uniformly at random from $[0,1]$. 
Further, we fixed the length of the constructed traces to $\len = 10$. 
We only accepted an instance if \DijkstraPruning\ executed more than $\len$ iterations to ensure that \DijkstraLearning\ reaches the point where a prediction is made. 
This set of 100,000 instances was split into a \emph{training set} of 80,000 instances used for building the machine learning models, a \emph{validation set} of 10,000 instances used for parameter tuning and a \emph{test set} of 10,000 instances used for the final experiments. 

To get an idea of a few parameters related to the shortest path distance in the generated instances, we provide some statistical data for the validation and test set in Table~\ref{tbl:target_info}.
We computed the average number of edges on a shortest path, the average, minimum and maximum cost of a shortest path, and the average weight of an edge on a shortest path. Note that the first row refers to these parameters with respect to the actual random weights, while the second row refers to the case when all edge weights are set to $1$. 

\begin{table}[t]
\begin{center}
\begin{small}
\begin{sc}
\begin{tabular}{l|l|lll|l}
& edges & mean & min & max & $w$-mean \\
\hline
rand & 4.363   & 0.553            & 0.065           & 1.848           & 0.127  \\
unit   & 2.225   & 1.154            & 0.129           & 3.217           & 1.000      
\end{tabular}
\end{sc}
\end{small}
\end{center}
\caption{Statistics on different shortest path parameters for 20,000 instances (validation and test set).}
\label{tbl:target_info}
\end{table}

\subsection{Machine Learning Results}\label{ss:ml_results}

For each of the 80,000 graphs in the training set, we executed a run of \DijkstraPruning, during which we stored both the features $X$ and the final returned shortest path distance $y$. After running this, we had a training set of 80,000 samples, where each sample was of shape $(X,y)$. We used these 80,000 samples to built both a neural network and a linear regression model.

The number of nodes in the two hidden layers of the neural network was optimised by minimising the mean absolute error (MAE) in a $k$-fold cross validation with $k=4$ and a batch size of 256. We tested layer sizes $8, 16, 32, 64$ and $128$; 
the results for the smoothed validation Mean Absolute Error can be found in Figure \ref{fig:cross-fold}. We decided to use 16 nodes per layer and train for 47 epochs. We did not use the validation set of 10,000 instances in the $k$-fold cross validation, since it was used to tune parameters $\alpha$ and $\beta$. We also tested a linear regression and the benchmark prediction, but the neural network performed best in both the training and the test set. 

\begin{figure}[t]
\begin{center}
\centerline{\includegraphics[width=.4\columnwidth]{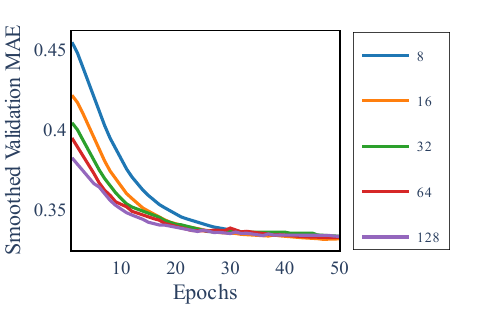}}
\vspace*{-2ex}
\caption{Normalised smoothed mean absolute error of the validation set for different hidden layer sizes.}
\label{fig:cross-fold}
\end{center}
\vskip -0.2in
\end{figure}

\begin{table}[t]
\begin{center}
\begin{small}
\begin{sc}
\begin{tabular}{l|lll}
Algorithm & NN & LR & AB \\\hline
Train MAE  & 0.0619     & 0.0883  & 0.1469        \\
Train MAPE & 0.1228     & 0.1844  & 0.3150        \\
Test MAE   & 0.0617     & 0.0880  & 0.1477        \\
Test MAPE  & 0.1217     & 0.1837  & 0.3160        \\
\end{tabular}
\end{sc}
\end{small}
\end{center}
\caption{Mean absolute error (MAE) and mean absolute percentage error (MAPE) for neural network (NN), linear regression (LR) and averaging benchmark (AB).}
\label{tbl:MAE}
\end{table}

The results for the performance of the neural network, linear regression and the benchmark are given in Table \ref{tbl:MAE}. 
Both machine learning models perform better than the benchmark prediction, in both the training and test set. 
Moreover, the neural network outperforms the linear regression model on both the training and the test set.

\subsection{Benchmark Algorithm \oracle}
In order to assess the performance of the different algorithms, we decided to use the following (idealised) benchmark algorithm to compare against: We run \DijkstraPruning\ with the pruning bound $B$ being \emph{initialised} with the actual shortest path distance $\Dist$. 
We refer to this algorithm as \oracle.

Note that this algorithm only inserts nodes into the priority queue which are necessary for finding the shortest path distance $\Dist$. Said differently, the algorithm spends the minimum possible amount of work to provide a certificate of optimality for the shortest path distance $\Dist$; no other algorithm could spend less work (as long as we insist that the shortest path distance is computed correctly always).


\subsection{Parameter Tuning}
There are two parameters in the \restart\ procedure which decide how to handle the machine learning prediction. The first one is $\alpha$, which specifies the amount by which the initial prediction is inflated. The second one is $\beta$, the amount by which the prediction is inflated. 
We investigate the impact of these parameters on the queue size; see  Figure~\ref{fig:qsize_and_distance} (bottom layer, both figures for same instance). On the left, we fix $\beta = 1.3$ and vary $\alpha$; on the right we fix $\alpha = 1.3$ and vary $\beta$. As is visible from these plots, a larger $\alpha$ means that the first call of \restart\ occurs later. Also, a larger $\beta$ leads to a larger number of nodes inserted during \restart.

We tested several configurations for $\alpha$ and $\beta$ on the instances in the validation set. 
Table~\ref{tbl:smart_alpha_beta_q} and Table~\ref{tbl:smart_alpha_beta_cumq} state the respective number of 
queue operations $Q$ and the cumulative queue size $C$ for various choices. As it turns out, for both these performance indicators it is best to choose $\alpha$ and $\beta$ small. 

\begin{table}[t]
\begin{center}
\begin{small}
\begin{sc}
\begin{tabular}{l|lllll}
$\alpha \backslash \beta $        & 1.05   & 1.10   & 1.20   & 1.50   & 2.00   \\ \hline
1.00 & \textbf{153.08} & 153.59 & 154.90 & 158.55 & 160.20 \\
1.05 & 154.37 & 154.80 & 155.83 & 158.08 & 158.84 \\
1.10 & 156.06 & 156.37 & 157.06 & 158.39 & 158.74 \\
1.20 & 160.38 & 160.50 & 160.82 & 161.20 & 161.30 \\
1.50 & 172.57 & 172.58 & 172.58 & 172.61 & 172.61 \\
2.00 & 182.27 & 182.27 & 182.27 & 182.27 & 182.27
\end{tabular}
\end{sc}
\end{small}
\end{center}
\caption{Average number of queue operations ($Q$) for \DijkstraPrediction\ for different values of $\alpha$ and $\beta$ on the validation set.}
\label{tbl:smart_alpha_beta_q}
\end{table}

\begin{table}[t]
\begin{center}
\begin{small}
\begin{sc}
\begin{tabular}{l|lllll}
$\alpha \backslash \beta $        & 1.05   & 1.10   & 1.20   & 1.50   & 2.00   \\ \hline
1.00 & \textbf{2446.0} & 2561.9 & 2740.4 & 3115.7 & 3274.6 \\
1.05 & 2573.5 & 2669.2 & 2815.5 & 3067.5 & 3150.1 \\
1.10 & 2732.6 & 2805.3 & 2914.2 & 3065.1 & 3104.4 \\
1.20 & 3112.6 & 3148.4 & 3197.7 & 3251.9 & 3266.3 \\
1.50 & 4104.3 & 4106.6 & 4109.5 & 4114.2 & 4114.2 \\
2.00 & 4809.9 & 4809.9 & 4809.9 & 4809.9 & 4809.9 
\end{tabular}
\end{sc}
\end{small}
\end{center}
\caption{Average cumulative queue size ($C$) for \DijkstraPrediction\ for different values of $\alpha$ and $\beta$ on the validation set.}
\label{tbl:smart_alpha_beta_cumq}
\end{table}

\subsection{Discussion of Results}
\begin{figure}
\centering
\begin{subfigure}{0.5\columnwidth}
  \centering
  \includegraphics[width=\columnwidth]{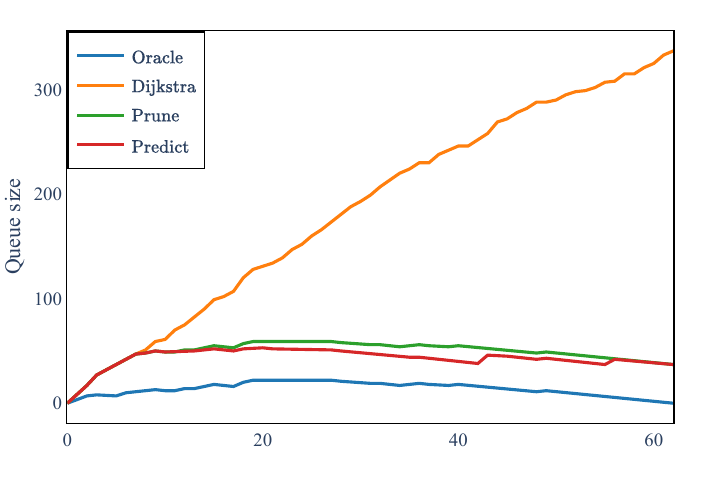}
\end{subfigure}%
\begin{subfigure}{0.5\columnwidth}
  \centering
  \includegraphics[width=\columnwidth]{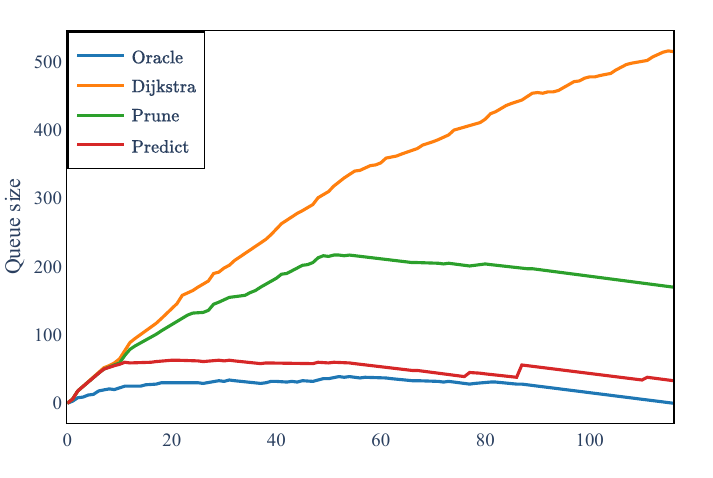}
\end{subfigure}
\begin{subfigure}{0.5\columnwidth}
  \centering
  \includegraphics[width=\columnwidth]{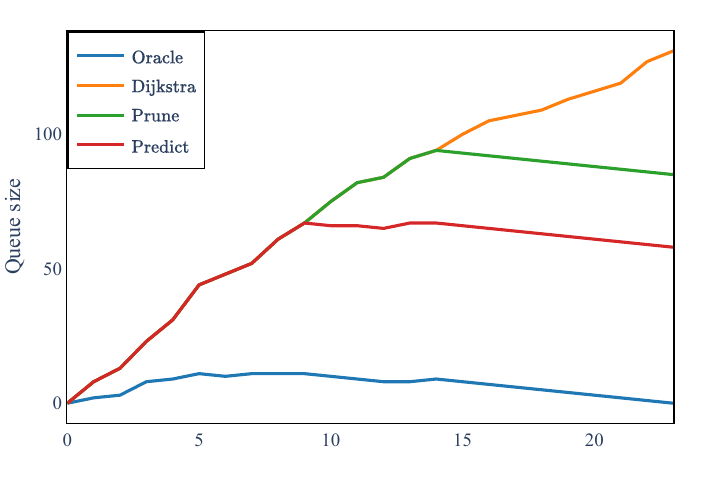}
\end{subfigure}%
\begin{subfigure}{0.5\columnwidth}
  \centering
  \includegraphics[width=\columnwidth]{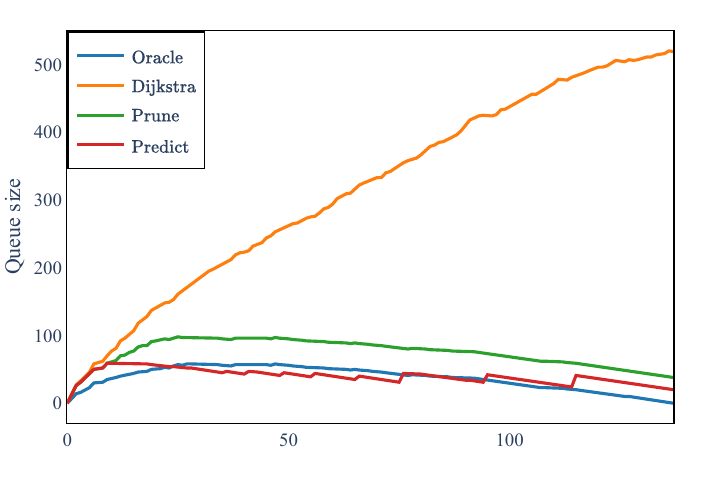}
\end{subfigure}
\begin{subfigure}{0.5\columnwidth}
  \centering
  \includegraphics[width=\columnwidth]{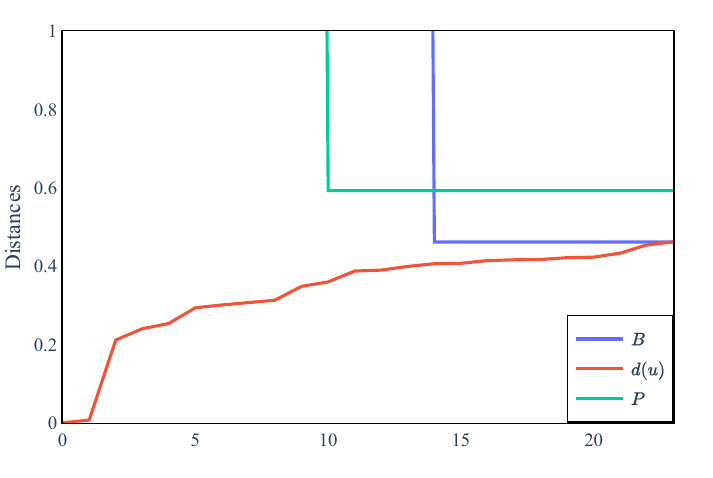}
\end{subfigure}%
\begin{subfigure}{0.5\columnwidth}
  \centering
  \includegraphics[width=\columnwidth]{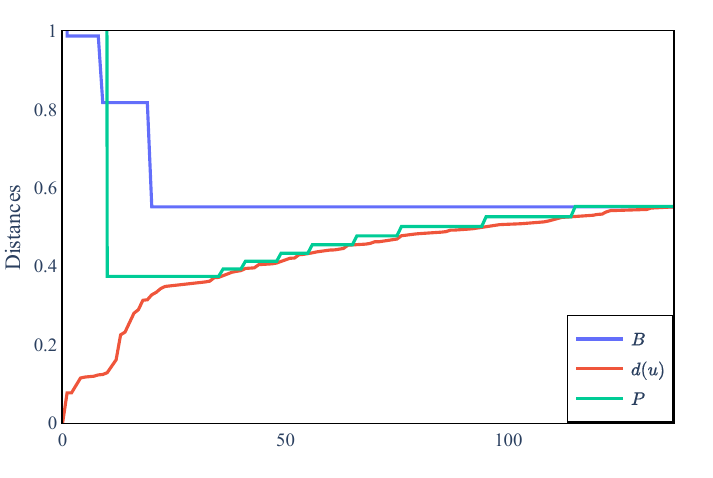}
\end{subfigure}%

\begin{subfigure}{0.49\columnwidth}
  \centering
  \includegraphics[width=\columnwidth]{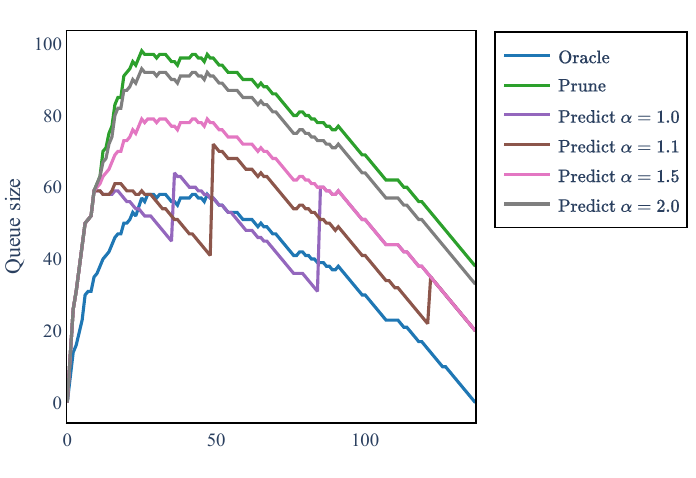}
\end{subfigure}%
\begin{subfigure}{0.49\columnwidth}
  \centering
  \includegraphics[width=\columnwidth]{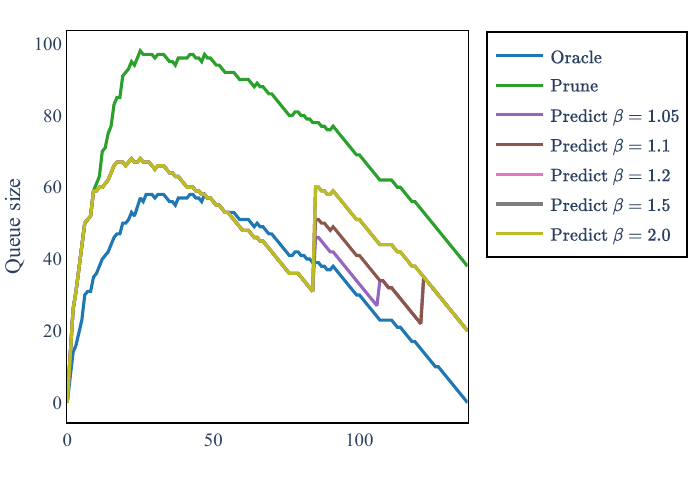}
\end{subfigure}
\caption{%
(In all plots the $x$-axis refers to the number of iterations.)
Top two layers (four distinctive instances): queue size of the algorithms. 
Third layer (same instances as in layer two): distance $d(u)$ together with $B$ and $\Pred$ for the \DijkstraPrediction. Bottom layer: queue size for fixed $\beta = 1.3$ and varying $\alpha$ (left) and fixed $\alpha = 1.3$ and varying $\beta$ (right) for \DijkstraPrediction. }
\label{fig:qsize_and_distance}
\end{figure}


We start by considering how the queue sizes differ for the different algorithms; see Figure \ref{fig:qsize_and_distance} top 2 rows. As to be expected, the queue size of \DijkstraLearning\ never exceeds the one of \DijkstraPruning\ and is larger or close to the one \oracle. The improvement of our \DijkstraPrediction\ with respect to \DijkstraPruning\ varies and depends on the instance. Next, we comment on the \restart\ procedure; see Figure \ref{fig:qsize_and_distance} third row. On the left, no restart was necessary since the prediction was not an overestimation of $D$. This plot also indicates the interplay of the prediction $\Pred$ and the pruning bound $B$; first the former and later the latter providing the smaller upper bound. On the right, we needed to do several \restart\ because the initial prediction turned out to be too small. \restart\ adds some nodes from the reserve list to the priority queue (queue size increases) and continues. 
After several inflations of the prediction with $\beta$, the prediction was sufficiently high to find the shortest path distance. 

If we zoom in to obtain a more fine-grained picture of the different queue operations executed by the algorithms, the results are as specified in Table~\ref{tbl:result1} (test set). 
The respective rows state the number of \del\ ($\textit{RM}$), \ins\ ($\textit{IS}$) and \dec\ ($\textit{DP}$) operations, the total number of queue operations ($Q$), the number of trials ($T$), the cumulative queue size $C$ (over all iterations), and the cumulative queue size relative to the \oracle\ $\bar{C}$. 
As to be expected, \oracle\ inserts and removes the minimum possible number of nodes only.  As can also be inferred from Invariant~\ref{invariant-all}, \Dijkstra, \DijkstraPruning\ and \DijkstraPrediction\ have the same number of \del\ operations.

Observe that the results show that our algorithm \DijkstraLearning\ outperforms all other algorithms, both in terms of the total number of queue operations and cumulative queue size. 
\DijkstraLearning\ outperforms \DijkstraPruning\ mostly on the number of insertions, as expected from the analysis in Section \ref{s:lower-bound}. 
In terms of cumulative queue size, our algorithm \DijkstraLearning\ even comes close to the benchmark \oracle, the average cumulative queue size being only 1.7 times larger then the one of \oracle
; \DijkstraPruning\ perform much worse, being off by a factor $3.6$. 

\begin{table}[t]
\begin{center}
\begin{small}
\begin{sc}
\begin{tabular}{l|r|rrrrr}
          & Oracle& Dijks & Prune  & Prediction   & bfs & $w$-bfs    \\ \hline
$\textit{RM}$         & 59.39    & 59.39   & 59.39     & 59.39   & 59.39      & 59.39  \\
$\textit{IS}$    & 59.39    & 335.50   & 122.91       & 91.73 & 117.66     & 118.86    \\
$\textit{DP}$       & 0.78  & 43.96    & 5.87         & 2.89    & 5.29       & 5.46    \\
$Q$       & 119.55   & 438.85   & 188.17     & 154.01  & 182.33     & 183.70   \\\hline
$T$       & 1.00     & 1.00    & 1.00      & 2.28     & 1.23       & 1.00    \\
$C$                 & 1456.16       & 13949.37 & 5245.96    & 2476.35 & 4825.09    & 4980.69              \\
$\bar{C}$           & 1.00& 9.58     & 3.60          & 1.70      & 3.31       & 3.42          
\end{tabular}
\end{sc}
\end{small}
\end{center}
\caption{Number of \del\ ($\textit{RM}$), \ins\ ($\textit{IS}$) and \dec\ ($\textit{DP}$) operations, the total number of queue operations ($Q$), the number of trials ($T$), the cumulative queue size $C$ (over all iterations), and the cumulative queue size relative to the \oracle\ $\bar{C}$ for \Dijkstra, \DijkstraPruning, \DijkstraLearning\ and the oracle algorithm. Averaged over all graphs in the test set.} 
\label{tbl:result1}
\end{table}


\subsection{Results for Different Graph Parameters}
In all results so far, we used a fixed set of parameters for the random graph model, namely an average degree $c$ of 8, and a target probability $q$ of 0.02. A natural question which might arise is how our algorithm performs on a less specific random graph structure. To answer this question, we built a new ML model, which is based on graphs with various random graph parameters, opposed to the single setting it was based on before. This new ML model showed us that, even though this new model is based on graphs with various input parameters, \DijkstraPrediction\ is still able to reduce the cumulative queue size. 

By taking $c$ from $\{2,5,8,16,32\}$ and $q$ from $\{0.02, 0.06, 0.18\}$, we created 15 pairs of random graph parameters. For each of these pairs, we constructed 80.000 graphs, which together formed a large training set of 1.2 million instances. We created a machine learning prediction model based on this training set, as explained before in Section \ref{ss:ml_results}. 

For each of the pairs of random graph parameters, we performed the \oracle, \Dijkstra, \DijkstraPruning\ and \DijkstraPrediction\ algorithm and compared the cumulative queue size of each algorithm to that of the \oracle. Table \ref{tbl:cq_results} shows the average relative cumulative queue size of a thousand instances. These results show us two things. Firstly and crucially, for each pair of random graph parameters, \DijkstraPrediction\ is able to reduce the cumulative queue size compared to \DijkstraPruning. This means that our algorithm does not lose its power to decrease the cumulative queue size, even when it is used on less specific random graph structures. 
For lower values of $q$, which means there are less target nodes in the instances, \DijkstraPrediction\ has a larger improvement over \DijkstraPruning\ than for larger values of $q$. 
A second and minor thing which these results show us is that the relative cumulative queue size can be lower than 1.0. E.g. when $c=2$ and $q=0.02$, \DijkstraPrediction\ has a lower cumulative queue size than \oracle. This seems unexpected, but can be explained by the average number of \restart\ routines done for those graph settings, which is 4.83. This fairly large number of \restart\ routines shows that the prediction was significantly too low and had to be increased several times. The low prediction caused the queue size to be low as well, which explains the small $\bar{C}$.

\begin{table*}[ht]
\begin{center}
\begin{small}
\begin{sc}
\begin{tabular}{l|rrr|rrr|rrr}
 & $q= 0.02$  &       &       & 0.06  &       &       & 0.18  &       &       \\ 
  & Dijkstra & Pruning & Predict & Dijkstra & Pruning & Predict & Dijkstra & Pruning & Predict \\ \hline
$c=2$ & 2.20  & 1.57  & 0.40  & 2.55  & 1.61  & 0.90  & 2.84  & 1.66  & 1.49  \\
5 & 6.10  & 2.62  & 1.01  & 7.71  & 2.92  & 2.19  & 8.86  & 2.95  & 2.82  \\
8 & 9.58  & 3.60  & 1.82  & 12.97 & 4.13  & 3.35  & 15.02 & 3.64  & 3.50  \\
16 & 16.04 & 4.76  & 2.84  & 24.78 & 5.26  & 4.51  & 31.20 & 5.07  & 4.92  \\
32 & 22.55 & 6.05  & 4.15  & 43.21 & 6.76  & 5.95  & 57.54 & 6.03  & 5.89 
\end{tabular}
\end{sc}
\end{small}
\end{center}
\caption{Cumulative queue sizes ($\bar{C}$) relative to the the cumulative queue size of \oracle\ for \Dijkstra, \DijkstraPruning\ and \DijkstraPrediction\ algorithm, for different random graph parameters $c$ and $q$.}
\label{tbl:cq_results}
\end{table*}

\subsection{Timing results}

Next to fine grained results counting operations on random graphs, we also tested the speed of our algorithm. We did this on certain graph instances which amplify the benefit of our algorithm. We call those instances the \emph{fortunate instances}, which we define in more detail below. Furthermore, we assumed that we had a perfect prediction in our pruning algorithm ($P=D$ and $\alpha=0$), and assumed this prediction was available from the start ($i_0=0$).

A fortunate graph instance is defined as follows (see Figure \ref{fig:fortunateinstance} for an example). Two input parameters must be given: the number of nodes, $n\in \mathbb{N}$, and the fraction of nodes which is on the shortest path, $r\in \mathbb{R}_+ $. From these parameters, the number of nodes on the shortest path, $x$, can be derived: $x:= \lfloor r\cdot n\rfloor$. We denote these nodes on the shortest path as $u_i$, for $i=0,\ldots, x-1$. Furthermore, we label $u_0$ as the start $s$ and label $u_{x-1}$ as the target $t$. The rest of the $n-x$ nodes in the graph are denoted by $v_j$, for $j=1, 2, \ldots, n-x$. Next to these $n$ nodes, there are $(x-1)(1+n-x)$ edges in the fortunate graph. There is an edge $(u_i, u_{i+1})$ for $i=0,1, \dots, x-2$, each with weight $\frac{1}{x-1}$. Together, these edges form the shortest path from $s$ to $t$, counting up to a shortest path length of exactly 1. Furthermore, there are edges $(u_i, v_j)$ for $i=0,1, \ldots, x-2$ and $j=1,2,\dots,n-x$. Each edge from $u_i$ to $v_j$ has length $2- \frac{2i}{x-1}$.

\begin{figure}[t]
\centering
\fbox{\includegraphics[width=0.8\linewidth]{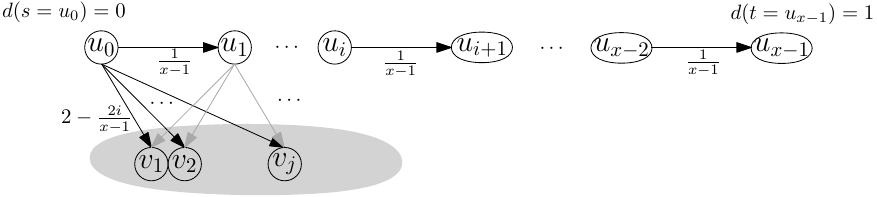}}
\caption{Fortunate instances used for timing analysis. Input parameters: the number of nodes $n\in \mathbb{N}$ and the fraction of nodes which is on the shortest path: $r\in \mathbb{R}_+ $. The number of nodes on the shortest path, $x\in \mathbb{N} $ can be derived from $n$ and $r$ as follows $x:= \lfloor r\cdot n\rfloor$.}
\label{fig:fortunateinstance}
\end{figure}

For $i=0, 1, \ldots, x-2$ and $j=1,2,\dots, n-x$, we define $P_{ij}$ to be the path going through the first $i+1$ nodes on the shortest path, after which it goes to $v_j$, i.e. $P_{ij} = (u_0, u_1, \ldots, u_i, v_j)$. Then the total length of $P_{ij}$ is as follows:
$$w(P_{ij}) = i \cdot \frac{1}{x-1} + 2 - \frac{2i}{x-1} = 2 - \frac{i}{x-1}.$$
This means that for a certain node $v_j$, the sequence of $w(P_{ij})$'s decreases as $i$ increases, but the path lengths never drop below the shortest path distance which equals 1. 
Since these $w(P_{ij})$'s strictly decrease, the tentative distance to $v_j$ will be updated in each iteration. In the pruning algorithm, this will result in a \dec\ operation in the priority queue. In the \textsc{Pruning} algorithm, this will result in a update of the tentative distance for a node in the reserve set.

\begin{table}[]
\begin{center}
\begin{small}
\begin{sc}
\begin{tabular}{l|ll|ll|ll|ll}
     & \multicolumn{2}{c}{Oracle} & \multicolumn{2}{c}{Dijkstra} & \multicolumn{2}{c}{Pruning} & \multicolumn{2}{c}{orpr} \\
    & time      & count        & time      & count        & time       & count       & time      & count        \\ \hline
RM  & 0 & 1250 & 0         & 1250         & 0         & 1250              & 0.01      & 1250         \\
IS  & 0 & 1250 & 0.01      & 5000         & 0         & 5000               & 0         & 1250         \\
DP  & 0 & 0 & 2.88      & 4680000      & 2.93      & 4676250            & 0         & 0            \\
RRM & -         & -            & -         & -            & -          & -           & 0         & 0            \\
RIS & -         & -            & -         & -            & -          & -           & 0.01      & 3750         \\
RDP & -         & -            & -         & -            & -          & -           & 2.22      & 4676250     
\end{tabular}
\end{sc}
\end{small}
\end{center}
\caption{Number of, and time span of, \del\ ($\textit{RM}$), \ins\ ($\textit{IS}$) and \dec\ ($\textit{DP}$) operations for the priority queue and deletions  (\textit{RRM}), insertions  (\textit{RIS}) and distance updates (\textit{RDP}) for the reserve set.
For a fortunate graph instance with $n = 5000$ and $r = 0.250$. For the \textsc{Prediction} algorithm, we assumed a perfect prediction which is known from the start: $P=D$, $\alpha = 1$ and $i_0=0$. A binomial heap structure was used for the priority queues in all four algorithms.}
\label{tbl:timeresult1}
\end{table}

We counted and timed the operations for the \textsc{Oracle}, \textsc{Dijkstra}, \textsc{Pruning} and \textsc{Prediction} algorithm for a fortunate graph instance with $n=5000$ and $r=0.25$, for which the results are shown in \ref{tbl:timeresult1}.
As expected due to the construction of the graph, there are many decrease priority operations, which take up most of the time. In the \textsc{Predict} algorithm, these nodes are never inserted into the priority queue, but remain in the reserve set. Therefore, there are many distance updates in the reserve set (exactly as many as DP operations in for \textsc{Pruning}). Crucially however, these distance updates take less time, which makes the \textsc{Predict} algorithm more efficient. 

We executed this test for multiple values of $r$ and for two different heap structures, binomial heap and fibonacci heaps. The run time can be seen in Table \ref{tbl:timeresult2}. Especially for the binomial heap and for large $r$, the benefit of our algorithm becomes clearly visible. For the binomial heap structure and the graph instance in which $r=0.35$, the \textsc{Predict} algorithm is 35\% faster than the fastest algorithm without prediction (\textsc{Dijkstra}).

\begin{table}[]
\begin{center}
\begin{small}
\begin{sc}
\begin{tabular}{ll|llll}
          &       &$ r = $0.05 & 0.1  & 0.25 & 0.35 \\ \hline
binomial  & Oracle: & 0        & 0.01 & 0    & 0.01 \\
          & Dijkstra: & 1.41     & 2.73 & 5.5  & 6.54 \\
          & Pruning: & 1.18     & 3.09 & 5.4  & 7.52 \\
          & Predict: & 1.3      & 2.28 & 4.2  & 4.82 \\\hline
fibonacci & Oracle: & 0.01     & 0    & 0.01 & 0.02 \\
          & Dijkstra: & 1.06     & 1.74 & 3.79 & 4.67 \\
          & Pruning: & 1.08     & 1.85 & 4.31 & 4.78 \\
          & Predict: & 1.03     & 2.04 & 4.49 & 5.08
\end{tabular}
\end{sc}
\end{small}
\end{center}
\caption{Time for different fortunate graph instances with $n=5000$ and varying fraction of nodes on the shortest path $r$. Two different heap structures where used for the priority queues in the different algorithms. }
\label{tbl:timeresult2}
\end{table}

\newpage
\bibliography{biblio.bib}

\newpage
\appendix

\end{document}